\documentclass[a4paper,UKenglish,cleveref,
autoref]{lipics-v2019}
\PassOptionsToPackage{labelformat=simple}{subcaption}
\usepackage{stmaryrd}
\usepackage{booktabs}
\usepackage{tikz}
\usetikzlibrary{shapes.multipart}
\usetikzlibrary{decorations.pathreplacing}


\bibliographystyle{plainurl}

\title{Query Preserving Watermarking Schemes for Locally Treelike
Databases} 

\titlerunning{Query Preserving Watermarking Schemes for Locally Treelike
Databases}

\author{Agnishom Chattopadhyay}{Chennai Mathematical Institute, India
\and UMI ReLaX, Indo-French joint research unit}{}{}{}

\author{M.~Praveen}{Chennai Mathematical Institute, India
\and UMI ReLaX, Indo-French joint research unit}{}{}{Partially supported by a grant from the Infosys foundation.}

\authorrunning{Agnishom Chattopadhyay and M.~Praveen}

\Copyright{Agnishom Chattopadhyay and M.~Praveen}

\ccsdesc[100]{Security and privacy~Information accountability and usage control}
\ccsdesc[100]{Theory of computation~Finite Model Theory}
\ccsdesc[100]{Information systems~Relational database model}

\keywords{Locally bounded tree-width, closure under minors, first-order queries, watermarking}

\category{}


\supplement{}


\acknowledgements{The authors thank B.~Srivathsan and K.~Narayan Kumar
for feedback on the draft.}

\nolinenumbers 

\hideLIPIcs  

\EventEditors{John Q. Open and Joan R. Access}
\EventNoEds{2}
\EventLongTitle{42nd Conference on Very Important Topics (CVIT 2016)}
\EventShortTitle{CVIT 2016}
\EventAcronym{CVIT}
\EventYear{2016}
\EventDate{December 24--27, 2016}
\EventLocation{Little Whinging, United Kingdom}
\EventLogo{}
\SeriesVolume{42}
\ArticleNo{23}

\theoremstyle{definition}

\begin{document}

\maketitle

\begin{abstract}
Watermarking is a way of embedding information in digital documents.
Much research has been done on techniques for watermarking relational
databases and XML documents, where the process of embedding
information shouldn't distort query outputs too much. Recently, techniques
have been proposed to watermark some classes of relational structures
preserving first-order and monadic second order queries. For
relational structures whose Gaifman graphs have bounded degree,
watermarking can be done preserving first-order queries.

We extend this line of work and study watermarking schemes for other
classes of structures. We prove that for relational structures whose
Gaifman  graphs belong to a class of graphs that  have locally bounded
tree-width and is closed under minors, watermarking schemes exist that
preserve first-order queries. We use previously known properties of
logical formulas and graphs, and build on them with some technical
work to make them work in our context.  This constitutes a part of the
first steps to understand the extent to which techniques from
algorithm design and computational learning theory can be adapted for
watermarking.
\end{abstract}

\section{Introduction}
\label{sec:introduction}
Watermarking of digital content can be used to check intellectual
property violations. The idea is to embed some information, such as a
binary string, in the digital content in such a way that it is not
easily apparent to the end user. If the legitimate owner of the digital content
suspects a copy to be stolen, they should be able to retrieve the
embedded information, even with limited access to the stolen copy,
even if it has been tampered to remove the embedded information. Here
there are two opposing goals. One is to be able to embed large amount
of information. The other is to ensure that the embedding doesn't
distort the content too much.

There can be many ways to measure how much distortion is acceptable.
In \cite{AK2002}, embedding is performed by flipping bits in numerical
attributes while preserving the mean and variance of all numerical
attributes. There are other works that focus on the
specific use of the digital content: in \cite{KZ2000}, the digital
content consists of graphs whose vertices represent locations and
weighted edges represent distance between locations. It is shown that
information can be embedded in such a way that the shortest distance
between any two locations is not distorted too much.

We study embedding information in relational databases such that the
distortion on query outputs is bounded.
\begin{example}
\label{ex:employee}
  The table \texttt{EmployeeTable} shown in Table~\ref{tb:EmployeeTable} is an
example of a database instance of an organization's record of employees.

\begin{table}
\begin{subtable}[t]{0.45\textwidth}
  \caption{The original \texttt{EmployeeTable}}
  \label{tb:EmployeeTable}
  \begin{tabular}{llr}
  \toprule
  FirstName &  City & Salary \\
  \midrule
  John &  Chennai & 10,000 \\
  Arjun &  Coimbatore & 20,000 \\
  Pooja &  Chennai & 15,000 \\
  Neha &  Vellore & 30,000 \\
  Padma &  Coimbatore & 20,000 \\
  \bottomrule
  \end{tabular}
\end{subtable}
\begin{subtable}[t]{0.45\textwidth}
  \caption{A distorted \texttt{EmployeeTable}}
  \label{tb:EmployeeTableD}
  \begin{tabular}{llr}
  \toprule
  FirstName &  City & Salary \\
  \midrule
  John &  Chennai & 10,001 \\
  Arjun &  Coimbatore & 19,999 \\
  Pooja &  Chennai & 14,999 \\
  Neha &  Vellore & 30,000 \\
  Padma &  Coimbatore & 20,001 \\
  \bottomrule
  \end{tabular}
\end{subtable}
  \caption{\texttt{EmployeeTable} of Ex.~\ref{ex:employee}}
\end{table}

  Consider the following query parameterized by the variable $x$.
\begin{align*}
  \psi(x) \equiv \texttt{select FirstName,Salary from EmployeeTable where City=}x
\end{align*}
  If we substitute the variable $x$ with a particular city $c$, the
above query lists the salaries of individuals working in that city.
Let $\mathtt{total}(c)$ be the sum of all salaries listed by the query
$\psi(c)$. We want to hide data in \texttt{EmployeeTable} by
distorting the \texttt{Salary} field. Let $\mathtt{total}'(c)$ be the
sum of all salaries listed by the query $\psi(c)$ run on the distorted
database. We say that the distortion \emph{preserves the query
$\psi(x)$} if there is a constant $B$ such that for any city $c$, the
absolute value of the difference between $\mathtt{total}(c)$ and
$\mathtt{total'}(c)$ is bounded by $B$.

Assuming that we can distort each employee's salary by at most $1$
unit and we wish to maintain the bound $B$ to be $0$, our options
are the following: increase the salary of \texttt{John} by $1$ and
decrease the salary of \texttt{Pooja} by $1$ or vice-versa.
Similarly for \texttt{Arjun} and \texttt{Padma}. This gives us four
different ways distort the data base. 
These distortions are designed to preserve only the query $\psi(x)$.
If a different query is run on the same distorted databases, the
results may vary widely.
Suppose the organization
distributes  the four distorted databases among it's branches. If a
stolen copy of the database is found, the
organization can run the query $\psi(x)$ on the stolen copy. By
observing the salaries of \texttt{John}, \texttt{Pooja},
\texttt{Arjun} and \texttt{Padma} and comparing them with the values
from the original database, one can narrow down on the branches where
the leakage happened. The organization only needs to run the query
$\psi(x)$ on the suspected stolen copy, just like any normal consumer
of the database. This is important since the entity possessing the
stolen copy may not allow full access to its copy of the database. We
say a watermarking scheme is \emph{scalable} if for larger databases,
there are larger number of ways to distort the database, while still
preserving queries of interest.
\end{example}

In \cite{Gross2011}, meta theorems are proved regarding the existence
of watermarking schemes for classes of databases preserving queries
written in classes of query languages. The Gaifman graph of a database
is a graph whose set of vertices is the set of elements in the
universe of the database. There is an edge between two elements if the
two elements participate in some tuple in the database.  For
databases with unrestricted structures, even simple queries can't be
preserved; see \cite[Theorem 3.6]{Gross2011} and \cite[Example
3]{GT2004}.  Preserving queries written in powerful query languages
and handling databases with minimum restrictions on their structure
are conflicting goals. For databases whose Gaifman graphs have bounded
degree, first-order queries can be preserved \cite{Gross2011}. It is
also shown that for databases whose Gaifman graphs are similar to
trees, MSO queries can be preserved. The similarity of a graph to a
tree is measured by tree-width. For example, XML documents are trees
and have tree-width $1$.

\subparagraph*{Contributions} We prove that watermarking schemes exist
for databases whose Gaifman graphs belong to a class of graphs that
have locally bounded tree-width and is closed under minors,
preserving unary first-order queries. Classes of graphs with
bounded degree are contained in this class. A graph $G$ has locally
bounded tree-width if it satisfies the following property: there
exists a function $f$ such that for any vertex $v$ and any number $r$,
the sphere of radius $r$ around $v$ induces a subgraph on $G$ whose
tree-width is at most $f(r)$.

\subparagraph*{Why first-order logic?} The pivotal Codd's theorem \cite{Codd1972}
states that first-order logic is expressively equivalent to relational
algebra, and relational algebra is the basis of standard relational
database query languages.
\subparagraph*{Why locally bounded
treewidth?} Classes of graphs with locally bounded treewidth are
good starting points to start using techniques from algorithm
design and computational learning theory in other areas.
Seese \cite{Seese1996} proved that first-order properties can be
decided in linear time for graphs of bounded degree. Baker
\cite{Baker1994} showed efficient approximation algorithms for some
specific hard problems, when restricted to planar graphs.  Eppstein
\cite{Eppstein2000} showed that Baker's technique continues to work in
a bigger class of graphs: it suffices for the class of graphs to have
locally bounded tree-width and additionally, the class should be closed
under minors. Frick and Grohe \cite{FG2001} showed that on any class
of graphs with locally bounded tree-width, any problem definable in
first-order logic can be decided efficiently\footnote{Here, efficiency
means fixed parameter tractability; see \cite{FG2001} for details.}.
For problems definable in first-order logic, the classes of graphs for
which efficient algorithms exist was then extended to bigger and
bigger classes: excluded minors \cite{FG2001J}, locally excluded
minors \cite{DGK2007}, bounded expansion, locally bounded expansion
\cite{DKT2010} and nowhere dense \cite{GKS2017}. It is now known that
nowhere dense graphs are the biggest class of graphs for which there
are efficient algorithms for first-order definable problems
\cite{DKT2010,Kreutzer2011,GKS2017}, provided some complexity
theoretic assumption are true. Results related to computational
learning theory have been proved in \cite{GT2004} for classes of
graphs with locally bounded treewidth. Recently, similar results have
been proven for nowhere dense classes of graphs \cite{PST2018}.
\subparagraph*{Why unary queries?} Some of the techniques we have used
are difficult to extend to non-unary queries. Some technical details
about this are discussed in the conclusion.

\subparagraph*{Related Works} The fundamental definitions of what it
means for a watermarking scheme to be scalable and preserving a
query was given in \cite{KZ2000}.
It was shown in \cite{KZ2000} that on weighted graphs, scalable
watermarking schemes exist preserving shortest distance between
vertices. The adversarial model was also introduced in \cite{KZ2000},
where the person possessing the stolen copy can introduce additional
distortion to evade detection. It is shown in \cite{KZ2000} that a
watermarking scheme for the non-adversarial model can be transformed
to work for the adversarial model, under some assumption about the
quantity of distortion introduced by the person trying to evade
detection, and the amount of knowledge the person possesses.
Gross-Amblard \cite{Gross2011} adapted these definitions for
relational structures of any vocabulary and any query written in
Monadic Second Order (MSO) logic, and showed results about classes of
structures of bounded degree and tree-width. Gross-Amblard \cite{Gross2011} also
provided the insight that existence of scalable watermarking schemes
preserving queries from a certain language is closely related to
learnability of queries in the same language. We make use of this
insight in our work. Grohe and Tur{\'a}n
\cite{GT2004} proved that MSO-definable families of sets in graphs of
bounded tree-width have bounded Vapnik-Chervonenkis (VC) dimension,
which has well known connections in computational learnability theory.
It is also shown in \cite{GT2004} that on classes of graphs with
locally bounded tree-width, first-order definable families of sets
have bounded VC dimension.

\section{Preliminaries}
\label{sec:preliminaries}
\subparagraph*{Relational databases} A signature (or
database schema) $\tau$ is a finite set of relation symbols
$\{R_1, \ldots ,R_t \}$. We denote by $r_{i}$ the arity of
$R_{i}$ for every $i \in \{1, \ldots, t\}$. A $\tau$-structure $G =
( V, R_1^{G}, \ldots R_t^{G} )$ (or database
instance) consists of a set $V$ called the universe, and an
interpretation $R_{i}^{G} \subseteq V^{r_{i}}$ for every relation symbol
$R_{i}$. For a fixed $s \in
\mathbb{N}$, a \emph{weighted structure} $(G,
W)$ is a finite structure $G$ together with a
weight function $W$, which is a partial function from
$V^s$ to $\mathbb{N}$, that maps a $s$-tuple $\overline{b}$
to its weight $W(\overline{b})$.

\subparagraph*{First Order and Monadic Second Order Queries}
An atomic formula is a formula of the form $x = y$ or $R(x_1,
\ldots x_r)$, where $x, y, x_1 \ldots x_r$ are variables
and $R$ is an $r$-ary relational symbol in $\tau$. First-order (FO)
formulas are formulas built from atomic formulas using the usual
boolean connectives and existential and universal quantification over the elements
of the universe of a structure.

Monadic Second Order (MSO) logic extends first-order logic by allowing
existential and universal quantifications over subsets of the
universe. Formally, there are two types of variables. Individual
variables, which are interpreted by elements of the universe of a
structure, and set variables, which are interpreted by subsets of the
universe of a structure. In addition to the atomic formulas of
first-order logic mentioned in the previous paragraph, MSO has atomic
formulas of the form $X(x)$, saying that the element interpreting the
individual variable $x$ is in the set interpreting the set variable
$X$. Furthermore, MSO has quantification over both individual and set
variables.

The quantifier rank, denoted $\mathrm{qr}(\psi)$ of a formula $\psi$
is the maximum number of nested quantifiers in $\psi$.  A free
variable of a formula $\psi$ is a variable $x$ that does not occur in
the scope of a quantifier.  The set of free variables of a formula
$\psi$ is denoted by $\mathrm{free}(\psi)$. A sentence is a formula
without free variables. We write $\psi(x_{1}, \ldots, x_{r})$ to
indicate that $\mathrm{free}(\psi) \subseteq \{x_{1}, \ldots,
x_{r}\}$. We denote the size of $\psi$ by $||\psi||$. We only work
with formulas that have free individual variables, but not free set
variables. Given a vector $\overline{x}=\langle x_{1}, \ldots, x_{s}
\rangle$ of variables, a formula $\psi(\overline{x})$ and a structure
$G$, we denote by $\psi(G)=\{\overline{a} \in V^{s} \mid G \models
\psi(\overline{a})\}$ the set of tuples of elements from the universe
$V$ of $G$ that can be assigned to the variables $\overline{x}$ to
satisfy $\psi(\overline{x})$.

Suppose $\psi(\overline{x}, \overline{y})$ is a
formula with two distinguished vectors of free variables $\overline{x}$ of
length $r$ and $\overline{y}$ of length $s$. We call
$\psi(\overline{x}, \overline{y})$ a $s$-ary query with
$r$ parameters. Given a structure $G$,
we call $\psi(\overline{a}, G) = \{ \overline{b} \in V^s \mid G \models
\psi(\overline{a}, \overline{b}) \}$ the \emph{output of the query
$\psi(\overline{x}, \overline{y})$ with parameter $\overline{a}$}. We refer
to $r$ (resp.~$s$), the length of $\overline{x}$
(resp.~$y$), as the \emph{input length} (resp.~the
\emph{output length}) of $\psi(\overline{x}, \overline{y})$. Given a
weighted structure $(G, W)$,
a parametric query $\psi(\overline{x}, \overline{y})$ and a parameter
$\overline{a}$, we extend the weight
function $W$ to weights of query outputs by defining $W(\psi(\overline{a},
G)) = \Sigma_{\overline{b} \in  \psi(\overline{a}, G)}
{W(\overline{b})}$. For a given structure $G$ and a query
$\psi(\overline{x}, \overline{y})$, we define $U = \bigcup_{\overline{a} \in
V^r}\psi(\overline{a}, G)$ to be the set of \emph{active tuples}.

\subparagraph*{Watermarking schemes} Suppose $c,d \in \mathbb{N}$. A weighted
structure $(G, W')$ is a \emph{$c$-local
distortion} of another weighted structure $(G, W)$ if for all $\overline{b}
\in V^s$, $|W'(b) - W(b)| \leq c$. The weighted
structure $(G, W')$ is a \emph{$d$-global distortion} of
$(G, W)$ with respect to a query $\psi(\overline{x},
\overline{y})$ if and only if, for all $\overline{a} \in V^r$,
$|W'(\psi(\overline{a}, G)) - W(\psi(\overline{a}, G))| \leq d$.

\begin{definition}[\cite{KZ2000,Gross2011}]
    \label{def:watermarkingSchemes}
Given a class of weighted structures $\mathcal{K}$ and a
query $\psi(\overline{x}, \overline{y})$, a \emph{
watermarking scheme preserving $\psi(\overline{x}, \overline{y})$} is
a pair of algorithms $\mathcal{M}$ (called the marker) and
$\mathcal{D}$ (called the detector) along with a function $f :
\mathbb{N} \to \mathbb{N}$ and a constant $d \in \mathbb{N}$ such
that:
\begin{itemize}
\item The marker $\mathcal{M}$ takes as input a weighted
    structure $(G, W) \in \mathcal{K}$ and a mark $\mu$, which is a
bit string of length $f(|U|)$, where $U$ is the set of active tuples. It outputs a weighted structure $(G, W_\mu) \in
    \mathcal{K}$ such that $(G, W_\mu)$ is a 1-local and $d$-global
    distortion of $(G, W)$ for the query $\psi(\overline{x},
\overline{y})$.
\item The detector $\mathcal{D}$ is given $(G, W)$, the original
    structure as input and has access to an oracle that runs queries
    of the form $\psi(\overline{x}, \overline{y})$ on $(G, W_\mu)$. The
    output of $\mathcal{D}$ is the hidden mark $\mu$.
\end{itemize}
\end{definition}

Intuitively, the marker takes a bit string and hides it in the
database by distorting weights. The detector detects the hidden mark
by observing the weights and comparing it with the original weights.
The term $f(|U|)$ denotes the length of the bit string that is hidden
in the database by the marker. We call a watermarking scheme
\emph{scalable} if the function $f$ grows at least as fast as some
fractional power asymptotically. For example, the scheme is scalable
if $f(n) = \sqrt{n}$ for all $n$, but not scalable if $f(n) = \log n$
for all $n$. We will mention later why scalability is important in
situations where adversaries try to erase watermarks.
Note that the algorithm $\mathcal{D}$ interacts with
the marked database $(G, W_\mu)$ only through $\psi(\overline{x},
\overline{y})$ queries. Hence, it is not
worthwhile distorting the weights of tuples that are not active.

Continuing Example~\ref{ex:employee}, the query $\psi(x)$ given there
can be written in First-order as $(\psi(\langle\mathit{city} \rangle,
\langle\mathit{name},\mathit{salary} \rangle) =
\mathtt{EmployeeTable}(\mathit{name},\mathit{city},\mathit{salary})$.
The set of active tuples is $U=\{\langle \text{John,
10000}\rangle,\langle \text{Arjun, 20000}\rangle,\langle
\text{Pooja, 15000} \rangle,\langle \text{Neha, 30000} \rangle,\langle
\text{Padma,20000}\rangle\}$. We can increase the salary of \texttt{John} by $1$ and
decrease the salary of \texttt{Pooja} by $1$ or vice-versa.
Similarly for \texttt{Arjun} and \texttt{Padma}. This gives $4$
different distortions that are $1$-local and $0$-global.
The marker algorithm can take a mark, which is a bit string of length
$2$, so there are $4$ possible marks. The marker can assign these $4$
marks to the $4$ possible distortions. The detector can observe the changes to the
salaries by querying the distorted copy and comparing the results with
the original database. The detector can compute the hidden mark by
accessing the assignment of the $4$ marks to the $4$ possible
distortions given by the marker. For any instance database of this
signature, we can pair off an employee of a city with another employee
in the same city and use one such pair to encode one bit of a
watermark to be hidden. If there are $n$ active tuples, we can encode
$\frac{n}{2}$ bits, assuming that there are at least two employees in
each city. For this watermarking scheme, the function
$f$ is defined as $f(n) = \frac{n}{2}$ and this is a scalable scheme.

Watermarking schemes can also be put in a context where there are
adversaries who know that there is some hidden mark and try to prevent
the detector algorithm from working properly, by distorting the
database further. Instead of the oracle running queries on $(G,
W_\mu)$, the queries are run on $(G,W'_\mu)$, which is a distortion of
$(G,W_\mu)$. The detector has to still detect the hidden mark $\mu$
correctly. Under some assumptions about the quantity of distortion
between $(G, W_\mu)$ and $(G, W'_\mu)$, watermarking schemes that work
in non-adversarial models can be transformed to work in adversarial
models; we refer the interested readers to \cite{KZ2000,Gross2011}.
The correctness of such transformations depend on probabilistic
arguments, where scalability helps. With bigger watermarks that are
hidden to begin with, there is more room to play around with the
distortions introduced by the adversaries.

\section{Watermarking schemes}
\label{sec:schemes}
In this section, we prove that scalable watermarking schemes exist for
some type of structures. First we prove that if the Gaifman graphs belong to 
a class of graphs with bounded tree-width, then scalable water marking
schemes exist preserving unary MSO queries. Then we prove that if the
Gaifman graphs belong to a class of graphs that is closed under minors
and that has locally bounded tree-width, then scalable water marking
schemes exist preserving unary FO queries.

\subsection{MSO Queries on Structures with Bounded Tree-width}
\subsubsection{Trees, Tree Automata and Clique-width}
We begin by reviewing some concepts and known results that are needed.

A binary tree is a $\{S_1, S_2, \preceq\}$-structure, where $S_1$,
$S_2$ and $\preceq$ are binary relation symbols. A tree $\mathcal{T} =
( T, S_1^\mathcal{T}, S_2^\mathcal{T}, \preceq^\mathcal{T}
)$ has a set of nodes $T$, a left child relation
$S_1^\mathcal{T}$ and a right child relation $S_2^\mathcal{T}$.
Relation $\preceq^\mathcal{T}$ stands for the transitive closure of
$S_1^\mathcal{T} \cup S_2^\mathcal{T}$, the tree-order relation. Given
a finite alphabet $\Sigma$, let $\tau(\Sigma) = \{S_1, S_2, \preceq \}
\cup \{ P_a | a \in \Sigma \}$ where for all $a \in \Sigma$, $P_a$ is
a unary symbol. A $\Sigma$-tree is a structure $\mathcal{T} = (
T,S_1^\mathcal{T}, S_2^\mathcal{T}, \preceq^\mathcal{T},
(P^\mathcal{T}_a)_{a\in \Sigma})$, where the restriction
$( T, S_1^\mathcal{T}, S_2^\mathcal{T}, \preceq^\mathcal{T}
)$ is a binary tree and for each $v \in T$ there exists exactly
one $a \in \Sigma$ such that $v \in P^\mathcal{T}_a$. We denote this
unique $a$ by $\sigma^\mathcal{T}(v)$. Intuitively, this represents
the labelling of nodes by letters from $\Sigma$ where
$\sigma^\mathcal{T}(v)$ is the label for the node $v$. We consider
trees with a finite number of pebbles placed on nodes. The pebbles are
considered to be distinct: pebble $1$ on node $v_{1}$ and pebble
$2$ on node $v_{2}$ is not the same as pebble $1$ on node
$v_{2}$ and pebble $2$ on node $v_{1}$. For some $k
\geq 1$, let $\Sigma_k = \Sigma \times \{0, 1\}^k$. This extended
alphabet denotes the position of the pebbles in the
tree. Suppose $\mathcal{T}$ is a $\Sigma$-tree and $k$ pebbles are
placed on the nodes $\overline{v} =
\langle v_1, \ldots, v_k \rangle$. Then
$\mathcal{T}_{\overline{v}}$ is the $\Sigma_k$-tree with the same
underlying tree as $\mathcal{T}$ and
$\sigma^{\mathcal{T}_{\overline{v}}}(u) = (\sigma^\mathcal{T}(u),
\alpha_1, \ldots, \alpha_k)$ where $\alpha_i = 1$ if and only if
$u=v_i$.

A $\Sigma$-tree automaton is a tuple $A = (Q, \delta, F)$
where $Q$ is a set of states and $F \subseteq Q$ are the accepting
states. The function $\delta : ((Q \cup \{*\})^2 \times \Sigma) \to Q$
is the transition function, where $*$ is a special symbol not in  $Q$. A
run $\rho : T \to Q$ of $A$ on a $\Sigma$-tree $\mathcal{T}$
is a function satisfying the following conditions.
\begin{itemize}
    \item If $v$ is a leaf then $\rho(v) = \delta(*, *,
        \sigma^\mathcal{T}(v))$.
    \item If $v$ has two children $u_1$ and $u_2$, then $\rho(v) =
        \delta(\rho(u_1), \rho(u_2), \sigma^\mathcal{T}(v))$.
    \item If $v$ has only a left child $u$ then $\rho(v) =
        \delta(\rho(u), *, \sigma^\mathcal{T}(v))$.
    \item Similarly if $v$ has only a right child.
\end{itemize}
If $v$ is the root of $\mathcal{T}$, a run $\rho$ of $A$ on
$\mathcal{T}$ is an accepting run if $\rho(v) \in F$.  A
$\Sigma_{(r+s)}$ tree automaton defines a $s$-ary query with $r$
parameters. We denote by $A(\overline{a}, \mathcal{T}) =
\{\overline{b} \in T^s \mid A \text{ has an accepting run on
} \mathcal{T}_{\overline{a}\overline{b}}\}$ the output of the query
$A$ on $\mathcal{T}$ with parameter $\overline{a}$.  It is
well known that MSO queries and tree automata have the same expressive
power.

\subparagraph*{Clique-width} A well-known notion of measuring the
similarity of a graph to a tree is its \emph{tree-width}. Many nice
properties of trees carry over to classes of structures of bounded
tree-width. For our purposes, we use \emph{clique-width}, a related
notion. It is well known that a structure of tree-width at most $k$
has clique-width at most $2^{k}$ \cite{CO2000}.

A $k$-colored structure is a pair $(G, \gamma)$ consisting
of a structure $G$ and a mapping $\gamma : V \to
\{1, \ldots , k\}$. A basic $k$-colored structure is a
$k$-colored structure $(G, \gamma)$ where
$|V| = 1$ and $R^G = \emptyset$ for all $R$. We
let $\Gamma_k$ be the smallest class of structures that contain all
basic $k$-colored structures and is closed under the following
operations:
\begin{itemize}
\item \emph{Union}: take two $k$-colored structures on disjoint
    universes and form their union.
\item \emph{$(i \to j)$ recoloring, for $1 \leq i, j \leq k$}: take a $k$-colored structure and recolor all vertices colored $i$ to $j$.
\item \emph{$(R, i_1, \ldots i_r)$-connecting, for every $r \geq 1$,
    every $r$-ary relation symbol $R$ and every $1 \leq i_1, \ldots i_r \leq k$}: take a
    $k$-colored structure $(G, \gamma)$ and add all tuples
    $\langle v_1, \ldots v_r \rangle \in V^r$ with $\gamma(v_j) = i_j$ for $1 \leq j
    \leq r$ to $\mathcal{R}^G$.
\end{itemize}

The \emph{clique-width} of a structure $G$, denoted by $\mathrm{cw}(G)$,
is the minimum $k$ such that there exists a $k$-coloring $\gamma :
V \to \{1, \ldots k \}$ such that $(G, \gamma) \in
\Gamma_k$.

For every $k$-colored structure $(G, \gamma ) \in \Gamma_{k}$ we can
define a binary, labeled parse-tree in a straightforward way. The
leaves of this tree are the elements of $G$ labeled by their color, and
each inner node is labeled by the operation it corresponds to. A
parse-tree (also called a clique decomposition) of a structure $G$ of
clique-width $k$ is a parse tree of some $(G, \gamma ) \in
\Gamma_{k}$. For the next lemma, it is important to note that if
$\mathcal{T}$ is a parse-tree for a structure $G$, then $V \subseteq
T$.
\begin{lemma}[{\cite[Lemma~16]{GT2004}}]
    \label{lem:cliqueDecompositionTransduction}
    Let $k \ge 1$. For every MSO formula $\varphi(\overline{x})$ there
    is a MSO formula $\tilde{\varphi}(\overline{x})$
    such that for every structure $G$ of clique-width $k$ and for every
    parse-tree $\mathcal{T}$ of $G$ we have $\varphi(G) =
    \tilde{\varphi}(\mathcal{T})$. Furthermore, there are
    constants $c, d$ (only depending on $k$ and the signature
    $\tau$) such that
    $|| \tilde{\varphi} || \le c || \varphi ||$ and
    $\mathrm{qr}(\tilde{\varphi}) \le \mathrm{qr}(\varphi) + d$.
\end{lemma}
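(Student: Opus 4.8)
The plan is to realise $G$ as an MSO-interpretation inside its parse-tree $\mathcal{T}$ — no ``copying'' of the universe is needed, since by the remark preceding the lemma the elements of $G$ are exactly the leaves of $\mathcal{T}$, so $V\subseteq T$ — and then to invoke the standard backwards translation of MSO formulas along such an interpretation, keeping track of how size and quantifier rank grow. Conceptually this is the familiar fact that evaluating a clique-width expression is an MSO-interpretation, hence enjoys backwards translation; the explicit construction is only needed to read off the two quantitative bounds.

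\emph{Setting up the interpretation.} A parse-tree of a clique-width-$k$ structure over $\tau$ is a binary $\Sigma$-tree whose (finite) alphabet $\Sigma$ records, at each leaf, its color in $\{1,\ldots,k\}$, and at each inner node the operation it performs — Union, an $(i\to j)$-recoloring, or an $(R,i_1,\ldots,i_r)$-connecting — of which there are finitely many, with $\Sigma$ depending on $k$ and $\tau$ only. Since $V$ is the set of leaves, ``$\mathrm{leaf}(x)$'' is quantifier-free. The key auxiliary formula is, for each color $c$, an MSO formula $\chi_c(x,y)$ expressing ``$x$ is a leaf, $x\preceq y$, and the color of $x$ in the colored structure computed at node $y$ is $c$''. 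One writes $\chi_c$ by existentially quantifying $k$ sets $Y_1,\ldots,Y_k$ meant to partition the path from $x$ to $y$, where $Y_i$ collects the path nodes at which $x$ currently carries color $i$; the conditions pinning this down — the correct initial color at $x$, the correct transition across each recoloring node on the path, invariance across every other node, and value $c$ at $y$ — are all first-order, so $\mathrm{qr}(\chi_c)$ is bounded by a constant depending on $k$ alone. Using the $\chi_c$, for each $r$-ary $R\in\tau$ put
\[
 \rho_R(x_1,\ldots,x_r)\ :=\ \bigvee_{(i_1,\ldots,i_r)}\ \exists w\,\exists w'\Bigl(\bigl(S_1(w,w')\vee S_2(w,w')\bigr)\wedge\mathrm{conn}_{R,i_1,\ldots,i_r}(w)\wedge\bigwedge_{j=1}^{r}\chi_{i_j}(x_j,w')\Bigr),
\]
where $\mathrm{conn}_{R,i_1,\ldots,i_r}(w)$ is the quantifier-free test that $w$ carries the corresponding connecting label and the disjunction ranges over the finitely many color-tuples. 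Thus $\rho_R$ has size and quantifier rank bounded in terms of $k$ and $\tau$.

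\emph{Correctness of the interpretation.} This is the main technical step: for every clique-width-$k$ structure $G$, every parse-tree $\mathcal{T}$ of $G$, every $R\in\tau$ and every tuple $\overline a$ over $V$, one must check $G\models R(\overline a)$ iff $\mathcal{T}\models\rho_R(\overline a)$ (and that $\mathrm{leaf}$ defines exactly $V$). This follows by induction over $\mathcal{T}$ from the semantics of the clique-width operations: the color of a leaf is propagated upward unchanged except at recoloring nodes, which $\chi_c$ faithfully simulates; a connecting node at $w$ adds precisely those tuples whose components, in the colored structure at $w$'s (unique) child $w'$, have the prescribed colors; and neither Union, nor recoloring, nor any later connecting ever deletes a tuple — so $\overline a\in R^G$ exactly when some connecting ancestor creates it, which is what $\rho_R$ asserts. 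The points needing care are that Union is applied only to disjoint universes, so no color ambiguity can arise, and the ``once added, always present'' invariant for tuples. Crucially, $\mathrm{leaf}$, the $\chi_c$ and the $\rho_R$ depend only on $k$ and $\tau$, never on $G$ or on the particular parse-tree.

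\emph{Backwards translation and the bounds.} Given $\varphi(\overline x)$, define $\tilde\varphi(\overline x)$ by structural recursion: replace each atom $R(\overline x)$ by $\rho_R(\overline x)$, leave $x=y$ and the Boolean connectives untouched, relativize every individual quantifier by $\exists x\,\psi\leadsto\exists x(\mathrm{leaf}(x)\wedge\tilde\psi)$ and every set quantifier by $\exists X\,\psi\leadsto\exists X\bigl(\forall x(X(x)\to\mathrm{leaf}(x))\wedge\tilde\psi\bigr)$, and finally conjoin $\bigwedge_i\mathrm{leaf}(x_i)$ over the free variables so that $\tilde\varphi(\mathcal{T})\subseteq V^{|\overline x|}$. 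An induction on $\varphi$, using the correctness of the interpretation for atoms and the relativizations for quantifiers, gives $\varphi(G)=\tilde\varphi(\mathcal{T})$. For the quantitative part: each atom is replaced by a formula of size at most a constant $c_0(k,\tau)$, each relativization and the final conjunction add only a constant number of symbols, so $||\tilde\varphi||\le c\,||\varphi||$; and since $\mathrm{leaf}$ is quantifier-free, relativizing a quantifier does not raise the quantifier rank beyond the unavoidable ``$+1$'', so writing $d$ for the (constant, $k$- and $\tau$-dependent) maximum quantifier rank among the $\rho_R$, a routine induction — in the quantifier case the outer $\max$ absorbs $d$ rather than accumulating it — yields $\mathrm{qr}(\tilde\varphi)\le\mathrm{qr}(\varphi)+d$, as claimed. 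The main obstacle is the correctness verification of the previous paragraph, and in particular getting $\chi_c$ and the persistence of tuples under later operations exactly right.
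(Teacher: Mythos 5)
This lemma is imported verbatim from \cite[Lemma~16]{GT2004}; the paper itself gives no proof, so there is nothing to diverge from, and your argument is precisely the standard interpretation proof underlying that citation. It is correct: the colour-tracking formulas $\chi_c$ (guessing $k$ sets along the leaf-to-ancestor path and checking consistency at recoloring nodes), the formulas $\rho_R$ exploiting that connecting nodes only ever add tuples and that basic structures have empty relations, and the substitution-plus-relativization translation with the observation that $d$ is absorbed by the outer maximum rather than accumulated, all go through as you describe --- the only points worth flagging are cosmetic (the orientation of $\preceq$ in $\chi_c$, and the fact that $\mathrm{leaf}(x)$ is quantifier-free only because leaves carry colour labels distinct from operation labels, though even a rank-one leaf test would not harm the bound $\mathrm{qr}(\tilde\varphi)\le\mathrm{qr}(\varphi)+d$).
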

\subsubsection{Watermarking Schemes to Preserve MSO Queries on
Structures With Bounded Tree-width}
Now we prove that there are scalable watermarking schemes that work for
structures from classes with bounded tree-width and preserve a given
MSO query. At a high level, the idea is the following: the given MSO
query is converted to an equivalent tree automaton. If the number of
active elements is large compared to the number of states in the
automaton, we can select pairs of elements that can't be distinguished
by the automaton. Either both the elements are in the output of the
query or none of them are. Hence, distorting one of them by a positive
amount and the other one by a negative amount will not contribute to
the global distortion.

We begin with the following lemma, which says that if a tree automaton
runs on a large tree, we can find large number of pairs of nodes that
are ``similar'' with respect to the automaton. A similar result
is proved and used in \cite{GT2004} to show that MSO-definable
families of sets in graphs of bounded tree-width have bounded
Vapnik-Chervonenkis (VC) dimension. The similarity of the following
result with that of \cite{GT2004} hints at some possible connections
between watermarking schemes and VC dimension.
\begin{lemma}
    \label{lem:treeaut}
    Let $A$ be a $\Sigma_{r+1}$ tree automaton with $m$ states. Let
    $\mathcal{T}$ be a $\Sigma$ tree. Suppose $Y \subseteq T$ is a set of
    nodes of $T$ with at least $2m^{m}+2$ elements\footnote{A similar result is
stated in \cite{Gross2011} with $4m$ elements, but there is an error in
the proof; see Appendix~\ref{app:errorInGross2011} for details.}. Then, there exists $n =
    \lfloor \frac{|Y|}{4m^{m}+4} \rfloor$ pairwise disjoint sets of nodes $V_1, V_2,
    \ldots, V_n \subseteq T$ and pairs $(b_i, b_i') \in (V_i \cap Y)^2$ of distinct nodes for
    all $i \in \{1, \ldots, n\}$ satisfying the following property: for
    every $\overline{a} = \langle a_1, a_2, \ldots , a_r\rangle \in T^r$
    and every $i \in \{1, \ldots, n\}$, if $\{a_{1}, a_{2}, \ldots,
    a_{r}\} \cap V_{i} = \emptyset$ then the runs of $A$ on
    $\mathcal{T}_{\overline{a}b_{i}}$ and
    $\mathcal{T}_{\overline{a}b_{i}'}$ label the tree roots with the same
    state.
\end{lemma}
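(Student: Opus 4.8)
The plan is to exploit the fact that the state labelling the root of a run of $A$ on $\mathcal{T}_{\overline{a}b}$ depends only on a bounded amount of information about where $b$ and the parameter coordinates $a_1,\dots,a_r$ sit in the tree. Fix a node $b \in Y$. Since the parameter is not allowed to meet $V_i$, when we later place $b$ inside a block $V_i$ and an arbitrary $\overline{a}$ outside it, the only ``local'' data about $b$ that the automaton sees is the behaviour of the subtree hanging below $b$ and, from above $b$, the state that $A$ would compute there if $b$ were unpebbled. Concretely, for each $b\in T$ define its \emph{type} as the pair $(\rho^{\mathrm{up}}(b),\rho^{\mathrm{down}}(b))$, where $\rho^{\mathrm{down}}(b)\in Q$ is the state $A$ assigns to $b$ in the (unique) run where $b$ carries the extra pebble bit $1$ but nothing below $b$ does, and $\rho^{\mathrm{up}}$ records the partial function $Q \to Q$ describing how a state computed at $b$ propagates up to the root when everything strictly above $b$ is unpebbled (this is a function because each step of $\delta$ along the path to the root is a fixed map once the sibling subtrees are fixed). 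There are at most $m \cdot m^m = m^{m+1}$ such types --- actually, since we only need $\rho^{\mathrm{down}}$ to be pinned down and $\rho^{\mathrm{up}}$ is a function $Q\to Q$, a cleaner count gives at most $m^m \cdot m^m$... I will choose whichever bound matches the constant $4m^m+4$ in the statement, pruning factors by noting that for the argument I only need the root state to agree, so I may quotient types further.

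The main combinatorial step: first partition the tree into $n' = \lfloor |Y|/(2m^m+2)\rfloor$... here is the subtlety. I would argue as follows. Walk up from the leaves and greedily cut $\mathcal{T}$ into a sequence of vertex-disjoint connected pieces $V_1, V_2, \ldots$ such that each piece contains at least two elements of $Y$ but, once a piece first accumulates ``enough'' $Y$-nodes sharing a common type, we close it off; a counting/pigeonhole argument inside each piece (a piece with more than $m^m$-many $Y$-nodes in it must contain two of the same type, by the type bound above) then supplies the pair $(b_i,b_i')$ of equal type inside $V_i\cap Y$. Disjointness of the $V_i$ is guaranteed by the cutting procedure. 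The factor $4m^m+4$ (versus the naive $2m^m+2$) is the price of ensuring each closed-off piece is genuinely large enough and that the leftover fragment at the root is absorbed. The key point making the conclusion work is this: if $\overline{a}$ avoids $V_i$ entirely, then the path from $b_i$ (resp.\ $b_i'$) to the root, together with all sibling subtrees hanging off that path inside $V_i$, is unpebbled, so the run's state at the ``exit point'' of $V_i$ is $\rho^{\mathrm{down}}(b_i)$ (resp.\ $\rho^{\mathrm{down}}(b_i')$) --- equal by choice of the pair --- and from the exit point upward the automaton sees identical pebbled/unpebbled configurations in both cases, hence reaches the same root state.

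The step I expect to be the main obstacle is making the greedy tree-cutting rigorous while simultaneously controlling (i) disjointness, (ii) the size lower bound $n=\lfloor |Y|/(4m^m+4)\rfloor$ on the number of blocks, and (iii) the guarantee that each block contains a monochromatic (same-type) pair inside $Y$ --- these three requirements pull in slightly different directions, and the branching structure of the binary tree (a node may need to merge contributions from two children before it has ``enough'' $Y$-mass) is what forces the extra factor of $2$ and the $+$-constants. A secondary technical point is verifying that $\rho^{\mathrm{up}}$ is genuinely well-defined as a function independent of the pebble placements \emph{outside} $V_i$: this is exactly where the hypothesis $\{a_1,\dots,a_r\}\cap V_i=\emptyset$ is used, since it ensures the portion of the run inside $V_i$ above $b_i$ is determined solely by unpebbled subtrees, and everything outside $V_i$ is shared verbatim between the $b_i$ and $b_i'$ scenarios. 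Once these are in place, the final implication is immediate from the definition of the type and an induction on the path length from $b_i$ to the boundary of $V_i$.
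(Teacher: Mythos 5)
The central step of your argument does not go through. You choose the pair $(b_i,b_i')$ by matching a type whose component $\rho^{\mathrm{down}}(b)$ is the state of $A$ at $b$ computed under the assumption that nothing below $b$ is pebbled, and you then claim that whenever $\{a_1,\dots,a_r\}\cap V_i=\emptyset$ the run on $\mathcal{T}_{\overline{a}b_i}$ reaches the exit point of $V_i$ in state $\rho^{\mathrm{down}}(b_i)$. But the hypothesis only forbids the parameters from lying \emph{inside} $V_i$; they may well lie strictly below $b_i$, in a lower block or in a subtree cut off beneath $V_i$. Your blocks cannot all be downward closed: if each $V_i$ had to contain every descendant of its members, distinct blocks would be nested rather than disjoint, and in a path-like tree (the hard case) you could not extract $\Omega(|Y|)$ disjoint blocks at all. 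So the state entering $V_i$ from below genuinely varies with $\overline{a}$, the actual state at $b_i$ need not be $\rho^{\mathrm{down}}(b_i)$, and matching a single state is not enough. The same problem infects $\rho^{\mathrm{up}}$: the one-step maps along the path from $b$ to the root depend on sibling subtrees lying outside $V_i$ that may carry parameter pebbles, so $\rho^{\mathrm{up}}(b)$ is not a well-defined invariant of $b$ alone; your remark that the portion outside $V_i$ is ``shared verbatim'' between the two scenarios only helps once you already know the two runs agree at the exit of $V_i$, which is exactly what is in question. This is essentially the oversight in Gross-Amblard's Lemma~5.5 (quoted in the appendix) that the present lemma is designed to repair.

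What is needed, and what the paper does, is to control the dependence on the unknown entry state. The paper repeatedly removes minimal subtrees containing at least $m^m+1$ elements of $Y$, builds a forest on their least common ancestors, and keeps only the blocks with at most one child, so each retained $V_i$ has at most one excluded descendant subtree, i.e.\ a single entry point. For each candidate $b$ one records the whole function $f_b\colon Q\to Q$ mapping the state produced at that entry point to the state at the top of $V_i$ when the pebble sits on $b$; since there are at most $m^m$ such functions and at least $m^m+1$ candidates, pigeonhole yields $b_i,b_i'$ with $f_{b_i}=f_{b_i'}$, which is what makes the conclusion hold for every admissible $\overline{a}$ simultaneously. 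This is also where the constant $4m^m+4$ comes from, whereas your type count ($m^{m+1}$ or $m^{2m}$, left unresolved) does not match the claimed bound. Finally, your greedy bottom-up cutting does not guarantee that a block has at most one lower block hanging beneath it; with several entry points even the function device would need maps from $Q^2$ or more, unless you add the paper's step of discarding the at most half of the blocks having two or more children in the lca-forest.
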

\begin{proof}
    This proof is a variant of the proof of \cite[Lemma 7]{GT2004}.
    From the bottom up, take a minimal subtree of $\mathcal{T}$
    (minimal with respect to inclusion) that contains at least
    $m^{m}+1$ elements of $Y$. Let $U_{1}$ be the elements of
    $Y$ in this subtree; then the root of this subtree is
    $\mathrm{lca}(U_{1})$, the least common ancestor of nodes in $U_1$. As the
    tree is binary, it holds that $|U_{1}|<2m^{m}+2$. Remove this subtree, and
    repeat the same procedure as long as at least $m^{m}+1$ elements from $Y$ are
    still present in the remaining tree. We get at least $\alpha=\lfloor
    \frac{|Y|}{2m^{m}+2} \rfloor$ sets $U_{1}, U_{2}, \ldots, U_{\alpha}$ that are
    pairwise disjoint. The nodes $\mathrm{lca}(U_{i})$ are all distinct.

    Define a binary relation $F$ on $H=\{U_{1}, U_{2}, \ldots,
    U_{\alpha}\}$ to be the set of all pairs $(U_{i},U_{j})$ such that
    $\mathrm{lca}(U_{i})$ is an ancestor of $\mathrm{lca}(U_{j})$
    and there is no $k$ such that $\mathrm{lca}(U_{k})$ is a
    descendent of $\mathrm{lca}(U_{i})$ and an ancestor of
    $\mathrm{lca}(U_{j})$.  Then $(H,F)$ is a forest with $\alpha$
    vertices and thus with at most $\alpha-1$ edges. Therefore, at
    most $\frac{\alpha}{2}$ vertices of this forest have more than one
    child. Without loss of generality, we can assume that
    $U_{1},\ldots, U_{\beta}$ have at most one child, for some
    $\beta \ge \frac{\alpha}{2}$.

    If $U_{i}$ has no children, we let $V_{i}$ be the set of nodes
    of the subtree rooted at $\mathrm{lca}(U_{i})$. If $U_{i}$ has one
    child $U_{j}$, we let $V_{i}$ be the set of nodes of the subtree
    rooted at $\mathrm{lca}(U_{i})$ that are not nodes in the
    subtree rooted at $\mathrm{lca}(U_{j})$.

    Now we select pairs $(b_{i},b_{i}')$ as stated in the lemma.
    Suppose $U_{i}$ has no children. Since $U_{i}$ has $m^{m}+1$
    elements and the automaton $A$ has only $m$ states, we can choose
    $b_{i},b_{i}'$ from $U_{i}$ such that the runs of $A$ on
    $\mathcal{T}_{\overline{a}b_{i}}$ and
    $\mathcal{T}_{\overline{a}b_{i}'}$ label $\mathrm{lca}(U_i)$ with
    the same state, where $\overline{a} = \langle a_1, a_2, \ldots ,
    a_r\rangle \in T^r$ is any tuple such that $\{a_{1}, a_{2},
    \ldots, a_{r}\} \cap V_{i} = \emptyset$. Indeed, the nodes in
    $\overline{a}$ don't appear in the subtree rooted at
    $\mathrm{lca}(U_i)$. Hence the runs of $A$ on
    $\mathcal{T}_{\overline{a}b_i}$ and
    $\mathcal{T}_{\overline{a}b'_i}$ label $\mathrm{lca}(U_i)$ with
    states that only depend on $b_{i}$ and $b_{i}'$ respectively.
    Since $U_{i}$ has $m^{m}+1$ elements and the automaton $A$
    has only $m$ states, we can choose $b_{i},b_{i}'$ from $U_{i}$
    such that the runs of $A$ on $\mathcal{T}_{\overline{a}b_{i}}$ and
    $\mathcal{T}_{\overline{a}b_{i}'}$ label $\mathrm{lca}(U_i)$ with
    the same state.
    
    Suppose $U_{i}$ has one child $U_{j}$. Let $q$ be a state of
    $A$, $b$ be a node in $V_{i}$ and let $\mathcal{T}_{qb}$
    denote the subtree obtained by performing the following changes on the subtree rooted at
    $\mathrm{lca}(U_{i})$:
    \begin{enumerate}
        \item Remove the subtree rooted at $\mathrm{lca}(U_{j})$ and
            put in its place a tree whose root is labeled with the
            state $q$ by the run of $A$.
        \item Place a pebble on the node $b$ to get a
            $\Sigma_{1}$ tree.
    \end{enumerate}
    Suppose $Q$ is the set of states of the automaton $A$. Define the
function $f_b: Q \to Q$ as $f_b(q)=q'$, where $q'$ is the state
labeling the root of $\mathcal{T}_{qb}$ when $A$ runs on it. We can
think of placing a pebble on $b$ as inducing the function $f_b$. If
$\mathrm{lca}(U_j)$ is labeled by $q$, then $\mathrm{lca}(U_i)$ is
labeled by $f_b(q)$. Since there are at least $m^m+1$ nodes in $U_i$
and there are at most $m^m$ functions from $Q$ to $Q$, we can choose
nodes $b_i,b'_i$ from $U_i$ such that $f_{b_i}=f_{b'_i}$.

Finally, let us prove that the pairs $(b_i,b'_i)$ we have chosen
satisfy the condition stated in the lemma. If $U_i$ has no children,
then nodes in $\overline{a}$ don't appear in the subtree rooted at
$\mathrm{lca}(U_i)$. Hence the runs of $A$ on
$\mathcal{T}_{\overline{a}b_i}$ and $\mathcal{T}_{\overline{a}b'_i}$
label $\mathrm{lca}(U_i)$ with the same state, by choice of
$(b_i,b'_i)$. Hence, the runs of $A$ on
$\mathcal{T}_{\overline{a}b_i}$ and $\mathcal{T}_{\overline{a}b'_i}$
label the tree roots also with the same state.

Suppose $U_i$ has one child $U_j$. The nodes in $\overline{a}$ don't
appear in $V_i$, but they may appear in the subtree rooted at
$\mathrm{lca}(U_j)$. Suppose the runs of $A$ on
$\mathcal{T}_{\overline{a}b_i}$ and $\mathcal{T}_{\overline{a}b'_i}$
label $\mathrm{lca}(U_j)$ with some state $q$. Then
$\mathrm{lca}(U_i)$ is labeled by $f_{b_i}(q)$ and $f_{b'_i}(q)$
respectively. By choice, $f_{b_i}(q) = f_{b'_i}(q)$. Hence, the roots
of $\mathcal{T}_{\overline{a}b_i}$ and $\mathcal{T}_{\overline{a}b'_i}$
are labeled by the same state.
\end{proof}

The following result is proved in \cite{Gross2011}, but the proof
in that paper used a variant of Lemma~\ref{lem:treeaut} whose proof has an
error. We give a proof with a different constant factor.
\begin{theorem}
  Suppose $\mathcal{K}$ is a class of structures with bounded clique-width. Suppose
$\psi(\overline{x}, y)$ is a unary MSO query of input length $r$,
where all the free variables are individual variables.
Then, there exists a scalable watermarking scheme preserving
$\psi(\overline{x},y)$ on structures in $\mathcal{K}$.
\label{thm:msoTreewidth}
\end{theorem}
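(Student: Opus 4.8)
The plan is to pull the query down to the parse-tree of a clique-decomposition, turn it into a tree automaton, and then invoke Lemma~\ref{lem:treeaut} to extract many pairs of active elements that the automaton --- hence the query --- cannot tell apart under any parameter that avoids them; flipping the weights of such a pair in opposite directions hides one bit while costing nothing globally. Concretely, fix a clique-width bound $k$ for $\mathcal{K}$. Given $(G,W)\in\mathcal{K}$, the marker computes a parse-tree $\mathcal{T}$ of $G$ (possible since $G$ is finite; recall $V\subseteq T$), applies Lemma~\ref{lem:cliqueDecompositionTransduction} to obtain an MSO formula $\tilde{\psi}(\overline{x},y)$ over $\Sigma$-trees with $\psi(G)=\tilde{\psi}(\mathcal{T})$, and converts $\tilde{\psi}$ into a $\Sigma_{r+1}$ tree automaton $A$ using the equivalence of MSO with tree automata. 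Since $||\tilde{\psi}||$ and $\mathrm{qr}(\tilde{\psi})$ are bounded purely in terms of $\psi$, $k$ and $\tau$, the number $m$ of states of $A$ is a constant for the class, and for all $\overline{a}\in V^r$ and $b\in V$ one has $b\in\psi(\overline{a},G)$ iff $A$ accepts $\mathcal{T}_{\overline{a}b}$. The marker computes the set $U\subseteq V$ of active elements and fixes $f(n)=\lfloor n/(4m^m+4)\rfloor$ once and for all. If $|U|<4m^m+4$ then $f(|U|)=0$, the mark is empty, and the marker returns $(G,W)$; otherwise it runs Lemma~\ref{lem:treeaut} with $Y=U$ to get $n=f(|U|)$ pairwise disjoint sets $V_1,\dots,V_n\subseteq T$ and pairs $(b_i,b_i')\in(V_i\cap U)^2$ of distinct active elements with the stated property, and for the mark $\mu=\mu_1\cdots\mu_n$ it sets $W_\mu(b_i)=W(b_i)+1$ and $W_\mu(b_i')=W(b_i')-1$ if $\mu_i=1$ and the reverse if $\mu_i=0$, leaving all other weights unchanged (if weights must remain in $\mathbb{N}$ one assumes active elements carry positive weight, otherwise one works over $\mathbb{Z}$).

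The resulting $W_\mu$ is a $1$-local distortion of $W$: the $2n$ nodes $b_1,b_1',\dots,b_n,b_n'$ are distinct (the $V_i$ are pairwise disjoint and $b_i\ne b_i'$) and each moves by exactly one unit. For the global bound fix $\overline{a}=\langle a_1,\dots,a_r\rangle\in V^r$. As the $V_i$ are pairwise disjoint, at most $r$ indices $i$ satisfy $\{a_1,\dots,a_r\}\cap V_i\ne\emptyset$. For every other $i$, Lemma~\ref{lem:treeaut} says $A$ labels the roots of $\mathcal{T}_{\overline{a}b_i}$ and $\mathcal{T}_{\overline{a}b_i'}$ with the same state, so $b_i\in\psi(\overline{a},G)\iff b_i'\in\psi(\overline{a},G)$; hence the $+1$ and $-1$ introduced by that pair either both occur inside $W_\mu(\psi(\overline{a},G))$ and cancel, or neither occurs, and the pair contributes $0$. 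Each of the at most $r$ remaining pairs shifts the query weight by at most $1$, so $|W_\mu(\psi(\overline{a},G))-W(\psi(\overline{a},G))|\le r$, and we may take $d=r$.

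For detection, $\mathcal{D}$ is given $(G,W)$ and knows $\psi$, so it recomputes $\mathcal{T}$, $A$, $U$ and the pairs $(b_i,b_i')$ by running the same deterministic procedure as the marker. For each $i$, since $b_i\in U$ there is some $\overline{a}\in V^r$ with $b_i\in\psi(\overline{a},G)$, which $\mathcal{D}$ locates by computing $\psi(\overline{a},G)$ from $(G,W)$ alone; querying the oracle with parameter $\overline{a}$ on $(G,W_\mu)$ then exposes the weight $W_\mu(b_i)$, and $\mu_i$ is recovered by comparing it with the known $W(b_i)$. Thus $\mathcal{D}$ outputs $\mu$ correctly. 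Finally $f(n)=\lfloor n/(4m^m+4)\rfloor=\Theta(n)$ grows at least as fast as a fractional power of $n$, so the scheme is scalable.

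The one genuinely new ingredient is Lemma~\ref{lem:treeaut}; everything else is assembly. The point to watch --- and the main obstacle to state cleanly --- is that $m$ must be a constant over the whole class, so that $f$ does not depend on the individual input structure: this is exactly what the size and quantifier-rank bounds in Lemma~\ref{lem:cliqueDecompositionTransduction} give, together with the fact that $V\subseteq T$, so that pebbles on universe elements translate to pebbles on the parse-tree on which $A$ runs.
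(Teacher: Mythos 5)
Your proposal is correct and follows essentially the same route as the paper's own proof: pull $\psi$ to the parse-tree via Lemma~\ref{lem:cliqueDecompositionTransduction}, convert to a tree automaton, apply Lemma~\ref{lem:treeaut} with $Y=U$ to get disjoint sets $V_i$ and indistinguishable pairs $(b_i,b_i')$, encode bits by opposite $\pm 1$ distortions, and bound the global distortion by $r$ via disjointness of the $V_i$. The extra details you supply (the small-$|U|$ case, constancy of $m$ over the class, and how the detector exposes $W_\mu(b_i)$) are consistent with, and slightly more explicit than, the paper's argument.
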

\begin{proof}
Suppose $G$ is a structure in $\mathcal{K}$, so it has bounded clique-width. From
Lemma~\ref{lem:cliqueDecompositionTransduction}, we get an MSO formula
$\tilde{\psi}(\overline{x},y)$, which can be interpreted on clique
decompositions of $G$ to get the same effect as interpreting
$\psi(\overline{x},y)$ on $G$. We then get an automaton $A$ equivalent
to $\tilde{\psi}(\overline{x},y)$. Let $U$ be the set of active tuples
of $G$ for the query $\psi(\overline{x},y)$. Now we apply
Lemma~\ref{lem:treeaut}, setting $\mathcal{T}$ to be a clique
decomposition of $G$ and $Y$ to be the set of active tuples $U$. We
get $n$ pairs $(b_1,b'_1),(b_2,b'_2),\ldots, (b_n,b'_n)$, where $n$
is a constant fraction of $|U|$. Given a weight function $W$ for $G$
and a mark $\mu:\{0,1\}^n$, we define the new weight function $W'$ as
follows. We set $(W'(b_i),W'(b'_i))=(W(b_i)+1,W(b'_i)-1)$ if
$\mu(i)=0$ and $(W'(b_i),W'(b'_i))=(W(b_i)-1,W(b'_i)+1)$ if
$\mu(i)=1$. For all other elements, $W'$ is same as $W$. The modified
weight function $W'$ has local distortion bounded by $1$ by
construction. The detector can recover the bits of the mark $\mu$ by
querying the original and distorted databases and noting the
differences in weights assigned to active tuples by $W$ and $W'$. We
will show that it has global distortion bounded by $r$, the input
length of $\psi(\overline{x},y)$.

Suppose $\overline{a}=\langle a_1, a_2, \ldots, a_r \rangle$ is used
as input parameter to the query $\psi(\overline{x},y)$ on $G$ and
$(b_i,b'_i)$ is a pair selected from a set $V_i$ as in
Lemma~\ref{lem:treeaut}. If $\{a_1, \ldots, a_r\} \cap V_i =
\emptyset$, then the runs of $A$ on $\mathcal{T}_{\overline{a}b_i}$
and $\mathcal{T}_{\overline{a}b'_i}$ end in the same state. Hence,
$b_i \in \psi(\overline{a},G)$ iff $b'_i \in \psi(\overline{a},G)$.
This means that either both $b_i$ and $b'_i$ are in $\psi(\overline{a},G)$
or both of them are absent. Hence, the distortion on $b_i,b'_i$ cancel
each other, provided $\{a_1, \ldots, a_r\} \cap V_i =
\emptyset$. Hence, a pair $(b_i,b'_i)$ may contribute to the global
distortion only when $\{a_1, \ldots, a_r\} \cap V_i \ne
\emptyset$. Since all the $V_i$ are mutually disjoint and there are at
most $r$ elements in $\{a_1, \ldots, a_r\}$, the global distortion is
at most $r$.
\end{proof}
Since bounded tree-width implies bounded clique-width, the above
result also holds for classes of structures with bounded tree-width.

\subsection{FO Queries on Minor Closed Structures with Locally Bounded Tree-width}
\label{sec:foLocallyBddTW}
In this section, we consider structures whose Gaifman graphs belong to
a class of graphs that has bounded local tree-width and is closed under
minors. We prove that scalable watermarking schemes exist preserving
unary first-order queries. We use concepts and techniques from
\cite{GT2004} where it is proved that in similar classes of graphs,
sets definable by unary first order formulas have bounded VC
dimension. It is observed in \cite{GT2004} that this result extends to
non-unary formulas. For this extension, \cite{GT2004} uses a
generic result from model theory that deals with VC dimension and
doesn't use Gaifman graphs. So far, there are no such
generic results about watermarking schemes yet. We have not yet found
ways to extend our results on watermarking to non-unary queries.

\subsubsection{Gaifman's Locality and Locally Bounded Tree-width}
First we review some concepts and known results that we use.
Given a structure $G = ( V, R_1^G, \ldots, R_t^G )$, its
\emph{Gaifman graph} is the undirected graph $( V, E )$,
where $(v_{1}, v_{2}) \in E$ if there is a relation $R_i$ in $G$ and a tuple
$\overline{v} \in R_i$ such that $v_1$ and $v_2$ appear in $\overline{v}$.
The distance between two elements, denoted $d(.,.)$, in a structure is
defined to be the shortest distance between them in the Gaifman graph. The
distance between two tuples of elements $\overline{v_{1}},
\overline{v_{2}}$ is $d(\overline{v_{1}}, \overline{v_{2}}) = \min\{d(v_{1},v_{2}) \mid v_{1} \in
\overline{v_{1}}, v_{2} \in \overline{v_{2}}\}$. Given $v
\in V$, $\rho \in \mathbb{N}$, the $\rho$-sphere $S_\rho(v)$ is the
set $\{ v' \mid d(v, v') \leq \rho \}$, and for a tuple $\overline{v}$,
$S_\rho(\overline{v}) = \bigcup_{v \in \overline{v}}{S_\rho(v)}$. We define
the $\rho$-neighborhood around a tuple $\overline{v}$ to be the
structure $N_\rho(\overline{v})$
induced on $G$ by $S_\rho(\overline{a})$.
The equivalence relation $\approx_\rho$ on tuples of elements
is defined as $\overline{a} \approx_{\rho} \overline{b}$ if
$N_\rho(\overline{a}) \approx N_\rho(\overline{b})$ (where $\approx$
denotes isomorphism).

A formula $\psi$ is said to be \emph{local} if there is a number $\rho
\in \mathbb{N}$ such that for every $G$ and tuples $\overline{v}_1$
and $\overline{v}_2$ of $G$, $N_\rho(\overline{v_1}) \approx
N_\rho(\overline{v_2})$ implies $G \models \psi(v_1) \iff G \models
\psi(v_2)$. This value $\rho$ is then called the \emph{locality
radius} of $\psi$. Gaifman's theorem states that every first-order
formula is local. We often annotate a formula $\psi$ with its locality
rank $r$ and write it as $\psi^{(r)}$ for the sake of explicitness.
Furthermore, $d^{>r}(v_{1},v_2)$ is a first-order formula enforcing
the distance between $v_1$ and $v_2$ to be at least $r+1$.

\begin{theorem}[Gaifman's locality theorem \cite{GAIFMAN1982}]
\label{thm:gaifman}
Every First Order formula $\varphi(\overline{x})$ is equivalent to a Boolean combination of the following:
\begin{itemize}
\item local formulas $\psi^{(\rho)}(\overline{x})$ around $\overline{x}$
    and
\item sentences of the form
    \begin{align*}
    \chi &= \exists x_1, \ldots, x_s \left
    (\bigwedge_{i=1}^s{\alpha^{(\rho)}(x_i)} \wedge \bigwedge_{1 \leq i <
    j \leq s} d^{>2\rho}(x_i, x_j) \right ) \enspace .
    \end{align*}
\end{itemize}
Furthermore,
\begin{itemize}
\item The transformation from $\varphi$ to such a Boolean combination is effective;
\item If $\mathrm{qr}(\varphi) = q$ and $n$ is the length of
    $\overline{x}$, then $\rho \leq 7^q$, $s \leq q + n$.
\end{itemize}
\end{theorem}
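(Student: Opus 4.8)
The plan is to prove the statement by induction on the quantifier rank $q=\mathrm{qr}(\varphi)$, carrying the bounds on the locality radius $\rho$ and the number of witnesses $s$ through the induction. When $q=0$ the formula is quantifier free, hence its truth on a tuple depends only on the isomorphism type of the substructure induced on the entries of that tuple, i.e.\ on $N_0(\overline{x})$; so $\varphi$ is already a local formula $\psi^{(0)}(\overline{x})$, and $\rho=0\le 7^0$, $s=0$. For the inductive step, write a rank-$q$ formula as a Boolean combination of formulas of the shape $\exists y\,\psi(\overline{x},y)$ with $\mathrm{qr}(\psi)=q-1$; by the induction hypothesis each $\psi$ is a Boolean combination of local formulas around $(\overline{x},y)$ and basic local sentences. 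The basic local sentences do not mention $y$, so they factor out of $\exists y$; putting the remaining local part into disjunctive normal form and using that a negation of a local formula is local and that a conjunction of local formulas of radii $r_i$ is local of radius $\max_i r_i$, the task reduces to showing that, for a single local formula $\theta$ of radius $t\le 7^{\,q-1}$, the formula $\exists y\,\theta^{(t)}(\overline{x},y)$ is equivalent to a Boolean combination of local formulas around $\overline{x}$ and basic local sentences.

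To handle $\exists y\,\theta^{(t)}(\overline{x},y)$ I would split on the distance from $y$ to $\overline{x}$. If $d(\overline{x},y)\le 2t$, then $S_t(\overline{x},y)\subseteq S_{3t}(\overline{x})$, so the existential statement restricted to such near witnesses---together with $\theta^{(t)}$ itself, read as the relativisation of $\theta$ to $S_t(\overline{x},y)$---is determined by $N_{3t}(\overline{x})$; relativising the existential quantifier to $S_{3t}(\overline{x})$ exhibits it as a single local formula around $\overline{x}$ of radius at most $3t$. If $d^{>2t}(\overline{x},y)$, then $N_t(\overline{x},y)$ is the disjoint union of $N_t(\overline{x})$ and $N_t(y)$ with no edge between the two parts, and here I would use a Feferman--Vaught-style composition theorem for disjoint unions (proved by a standard EF-game argument): the $\mathrm{qr}(\theta)$-type of a disjoint union is determined by the types of the two parts with their distinguished elements, so there are finitely many local formulas $\mu_k^{(t)}(\overline{x})$ and $\nu_k^{(t)}(y)$ with $\theta^{(t)}(\overline{x},y)\equiv\bigvee_k\bigl(\mu_k^{(t)}(\overline{x})\wedge\nu_k^{(t)}(y)\bigr)$ whenever $d^{>2t}(\overline{x},y)$. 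Distributing $\exists y$ over the disjunction and factoring out the local formulas $\mu_k^{(t)}(\overline{x})$ leaves, for a single local $\nu$, the formula $\Phi(\overline{x})\equiv\exists y\,\bigl(d^{>2t}(\overline{x},y)\wedge\nu^{(t)}(y)\bigr)$, asserting that there is a $\nu^{(t)}$-point far from all of $\overline{x}$.

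The remaining work is to put $\Phi(\overline{x})$ into Gaifman normal form, and I expect this to be the technical heart. One half is a pigeonhole argument: if there are $n+1$ realizations of $\nu^{(t)}$ that are pairwise at distance greater than $4t$ (where $n$ is the length of $\overline{x}$), then, since each entry of $\overline{x}$ can be within distance $2t$ of at most one of them, at least one of the realizations is at distance greater than $2t$ from all of $\overline{x}$ and so $\Phi(\overline{x})$ holds; this ``many spread-out witnesses'' condition is a basic local sentence with $n+1$ witnesses. For the complementary case one uses that the $\nu^{(t)}$-points then fall into at most $n$ clusters; guessing a maximal spread-out family of representatives and distinguishing, for each representative, whether it lies near or far from $\overline{x}$, one re-expresses $\Phi$ as a Boolean combination of basic local sentences (counting how the clusters sit in the structure) and local formulas around $\overline{x}$ (for the clusters that sit inside a bounded neighbourhood of $\overline{x}$). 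Getting this clustering localisation exactly right---so that no reference to $\overline{x}$ survives outside a bounded neighbourhood and outside genuine basic local sentences---is the delicate point; the near/far dichotomy and the relativisations to balls are routine once it is in place. An alternative, if one only wants the statement and not a self-contained argument, is to cite the standard textbook treatments (e.g.\ Ebbinghaus--Flum or Libkin).

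Finally I would do the bookkeeping. Each of the three constructions above enlarges the locality radius by at most a fixed multiplicative constant (with a careful choice of the radii in the clustering step one can take the constant to be $7$), so after $q$ nested quantifiers $\rho\le 7^q$; the basic local sentences produced gain at most one witness per quantifier on top of the $n+1$ introduced by the pigeonhole, giving $s\le q+n$; and the whole transformation is effective because EF-types, their Hintikka-style defining formulas, and the disjoint-union composition are all computable.
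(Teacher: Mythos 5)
The paper does not prove this statement at all: it is imported verbatim from Gaifman's 1982 paper (the citation \cite{GAIFMAN1982}) and used as a black box, so there is no in-paper proof to compare against. Your outline is the classical proof — induction on quantifier rank, the near/far split on the witness $y$, relativisation to $S_{3t}(\overline{x})$ in the near case, and a Feferman--Vaught/EF composition over the disjoint union $N_t(\overline{x})\sqcup N_t(y)$ in the far case — and those steps, including the pigeonhole direction for $n+1$ pairwise $4t$-scattered realizations of $\nu^{(t)}$, are correct as sketched.

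The genuine gap is exactly the step you flag and then do not carry out: converting $\Phi(\overline{x})\equiv\exists y\,\bigl(d^{>2t}(\overline{x},y)\wedge\nu^{(t)}(y)\bigr)$ into Gaifman form in the ``at most $n$ clusters'' case. This is not a routine remainder; it is the entire content of the theorem, and your proposed recipe — ``guess a maximal spread-out family of representatives and distinguish, for each representative, whether it lies near or far from $\overline{x}$'' — does not by itself produce formulas of the admissible shapes: a statement relating a far-away representative to $\overline{x}$ is neither a local formula around $\overline{x}$ (its quantifier is not relativised to a bounded ball around $\overline{x}$) nor a basic local sentence (which must not mention $\overline{x}$ at all). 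The standard resolution is a counting comparison: for suitable $m\le n$ and suitable radii, express $\Phi$ through sentences ``there exist $m$ pairwise scattered $\nu$-points'' combined with local formulas around $\overline{x}$ counting how many pairwise scattered $\nu$-points lie inside a fixed ball $S_{R}(\overline{x})$, and prove the equivalence by a maximal-scattered-family argument. Until that equivalence (with its exact radius arithmetic) is written down, the claimed bounds $\rho\le 7^q$ and $s\le q+n$ are also unsupported, since the factor $7$ arises precisely from the radius choices in this step. So: right approach, but the key lemma is asserted rather than proved; the legitimate shortcut — which is what the paper itself takes — is to cite Gaifman or a textbook treatment for the whole statement.
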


The \emph{$(q,k)$-type of $\overline{v}$ in $G$}, denoted by
$\mathrm{tp}^G_q(\overline{v})$, is the set of all first-order
formulas $\varphi(x_1, \ldots, x_k)$ of quantifier rank at most $q$
such that $G \models \varphi(\overline{v})$. A $(q, k)$-type
is a maximal consistent set of first-order formulas  $\varphi(x_1,
\cdots x_k)$ of quantifier rank at most $q$. Equivalently, a
$(q,k)$-type is the $(q, k)$-type of some $k$-tuple $\overline{v}$ in
some structure $G$. For a specific $(q, k)$, there are only
finitely many $(q, k)$ types. The number of such types is denoted by $t(q,k)$.

We get the following as a corollary of Gaifman's locality theorem.
\begin{corollary}
    \label{cor:typesLocality}
    Let $q \in \mathbb{N}$ and $\rho = 7^{q}$. Let $G$ be a structure
    and $\overline{a},\overline{a}' \in V^{r}$, $\overline{b},
    \overline{b}' \in V^{s}$ such that
    $\mathrm{tp}_{q}^{G}(\overline{a})=\mathrm{tp}_{q}^{G}(\overline{a}')$, 
    $\mathrm{tp}_{q}^{G}(\overline{b})=\mathrm{tp}_{q}^{G}(\overline{b}')$,
    $d(\overline{a},\overline{b}) \ge 2\rho + 1$ and
    $d(\overline{a}',\overline{b}') \ge 2\rho + 1$. Then
    $\mathrm{tp}_{q}^{G}(\overline{a},\overline{b}) =
    \mathrm{tp}_{q}^{G}(\overline{a}',\overline{b}')$.
\end{corollary}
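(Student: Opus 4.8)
The plan is to show that every first-order formula $\varphi(\overline{x}, \overline{y})$ with $\mathrm{qr}(\varphi) \le q$ has the same truth value at $(\overline{a}, \overline{b})$ and at $(\overline{a}', \overline{b}')$ in $G$; since the two $(q, r+s)$-types in question are precisely the sets of such formulas they satisfy, this is the corollary. First I would apply Gaifman's theorem (Theorem~\ref{thm:gaifman}) to $\varphi$, viewed as a formula in the $r+s$ free variables $\overline{x}\,\overline{y}$: it becomes a Boolean combination of sentences $\chi_1, \dots, \chi_m$ of the displayed form and of local formulas $\psi_1^{(\rho_1)}(\overline{x}, \overline{y}), \dots, \psi_\ell^{(\rho_\ell)}(\overline{x}, \overline{y})$ around the tuple $\overline{x}\,\overline{y}$, with $\rho_j \le 7^{\mathrm{qr}(\varphi)} \le 7^q = \rho$ for every $j$. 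Each $\chi_i$ is a sentence, so $G \models \chi_i$ is a fixed bit that contributes equally to both sides; hence it suffices to prove, for each $j$, that $G \models \psi_j^{(\rho_j)}(\overline{a}, \overline{b})$ iff $G \models \psi_j^{(\rho_j)}(\overline{a}', \overline{b}')$.

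Fix one such local formula $\psi^{(\rho')}$ with $\rho' \le \rho$. By locality, its truth at $(\overline{a}, \overline{b})$ depends only on the marked structure $N_{\rho'}(\overline{a}\,\overline{b})$. This is where the distance hypothesis enters: since $d(\overline{a}, \overline{b}) \ge 2\rho + 1$ and $\rho' \le \rho$, a triangle-inequality argument shows that $S_{\rho'}(\overline{a})$ and $S_{\rho'}(\overline{b})$ are disjoint and that no relational tuple of $G$ meets both, so $N_{\rho'}(\overline{a}\,\overline{b})$ is the disjoint union of $N_{\rho'}(\overline{a})$ and $N_{\rho'}(\overline{b})$ with the corresponding halves of the tuple marked; the same holds for the primed tuples. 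Now I would invoke the Feferman--Vaught composition theorem for disjoint unions: the truth of $\psi^{(\rho')}(\overline{a}, \overline{b})$ is a fixed Boolean function of the bounded-quantifier-rank local types of $\overline{a}$ inside $N_{\rho'}(\overline{a})$ and of $\overline{b}$ inside $N_{\rho'}(\overline{b})$, and each of these is a function of $\mathrm{tp}^G_q(\overline{a})$, respectively $\mathrm{tp}^G_q(\overline{b})$. By hypothesis these equal $\mathrm{tp}^G_q(\overline{a}')$ and $\mathrm{tp}^G_q(\overline{b}')$, so $\psi^{(\rho')}$ has the same truth value at $(\overline{a}, \overline{b})$ and $(\overline{a}', \overline{b}')$, and the argument closes.

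The step I expect to be the real work is the quantifier-rank bookkeeping at the end: one must check that the pieces of local information about $\overline{a}$ and about $\overline{b}$ that the composition theorem calls for, once expressed back inside $G$ as $\rho'$-relativised formulas, still have quantifier rank at most $q$, so that they are genuinely recorded in $\mathrm{tp}^G_q(\overline{a})$ and $\mathrm{tp}^G_q(\overline{b})$; this rests on the fact that in Gaifman's normal form the local formulas associated with a quantifier-rank-$q$ formula can be chosen of quantifier rank at most $q$, and on the precise calibration between the radius $\rho = 7^q$ and the separation $2\rho+1$, which is exactly what makes the two $\rho'$-neighbourhoods decouple without loss. Verifying that the sentence part contributes equally and that the two $\rho'$-spheres are disjoint is routine by comparison.
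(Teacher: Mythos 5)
Your overall route is the one the paper intends: Corollary~\ref{cor:typesLocality} is stated as a direct consequence of Theorem~\ref{thm:gaifman} with no written proof, and your skeleton (Gaifman normal form for an arbitrary quantifier-rank-$q$ formula in the variables $\overline{x}\,\overline{y}$, sentences transfer trivially, the distance hypothesis splits the $\rho$-neighbourhood of the joint tuple, composition over the disjoint union) is exactly the natural way to carry that out. The problem is the step you yourself flag as ``the real work'', and as written it is not closed. After the composition step you need $\overline{a}$ and $\overline{a}'$ (and likewise $\overline{b},\overline{b}'$) to agree on certain $\rho$-local conditions in the variables $\overline{x}$ alone, and you justify this by asserting that the local formulas in the Gaifman normal form of a quantifier-rank-$q$ formula can be chosen of quantifier rank at most $q$. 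Theorem~\ref{thm:gaifman} as quoted gives no bound whatsoever on the quantifier rank of the local formulas, and under the standard relativised reading such a bound cannot hold: merely relativising one quantifier to the sphere $S_{\rho}(\overline{x})$ with $\rho=7^{q}$ costs roughly $\log_{2}(7^{q})\approx 2.8q$ quantifier rank, and the Feferman--Vaught reduction returns conditions whose rank is that of the local formulas (Hintikka formulas of the parts), inflated further when they are re-expressed inside $G$ as formulas about $\overline{a}$. Since the hypothesis $\mathrm{tp}^{G}_{q}(\overline{a})=\mathrm{tp}^{G}_{q}(\overline{a}')$ only transfers formulas of rank at most $q$, your concluding sentence ``each of these is a function of $\mathrm{tp}^{G}_{q}(\overline{a})$, respectively $\mathrm{tp}^{G}_{q}(\overline{b})$'' is precisely the point at issue and remains unproved. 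To close it one needs either a version of the normal form/splitting argument with explicit quantifier-rank control matched to the rank of the types in the hypothesis, or to run the argument with types of a suitably larger rank $m(q)$ in the hypothesis; the paper gives no proof, so it carries the same obligation, but your write-up does not discharge it.

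A secondary, smaller inaccuracy: from $d(\overline{a},\overline{b})\ge 2\rho+1$ you conclude that no relational tuple of $G$ meets both $S_{\rho}(\overline{a})$ and $S_{\rho}(\overline{b})$. In the boundary case $d(\overline{a},\overline{b})=2\rho+1$ this can fail (two elements at distance exactly $\rho$ from $\overline{a}$ and from $\overline{b}$ respectively may lie in a common tuple), so the structure induced on $S_{\rho}(\overline{a})\cup S_{\rho}(\overline{b})$ need not be the disjoint union of the two neighbourhoods. Your triangle-inequality argument needs $d\ge 2\rho+2$, or a separate argument that the local formulas are insensitive to such boundary tuples, before the composition theorem for disjoint unions can be invoked.
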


\subparagraph*{Locally Bounded Tree-width}
We say that a class of structures $\mathcal{K}$ has
\emph{locally-bounded tree-width} if there exists a function $f : \mathbb{N}
\to \mathbb{N}$ such that given any $G \in \mathcal{K}$, any $v \in
V$ and any $r \in \mathbb{N}$, the tree-width of $N_r(x)$ is at most
$f(r)$. 

Next we recall some properties of class of graphs closed under minors.
An edge contraction is an operation which removes an edge from a graph
while simultaneously merging the two vertices it used to connect. A
graph $H$ is a minor of a graph $G$ if a graph isomorphic to $H$ can
be obtained from $G$ by contracting some edges, deleting some edges
and deleting some isolated vertices. A class $\mathcal{K}$ of graphs
is said to be closed under minors if for every graph $G$ in
$\mathcal{K}$ and every minor $H$ of $G$, $H$ is also in
$\mathcal{K}$.

Suppose a class of graphs $\mathcal{K}$ is closed under minors and has
locally bounded tree-width (the class of planar graphs is an example).
Let $G$ be a graph in $\mathcal{K}$ and let $v$ be an arbitrary vertex
in $G$. For $i \ge 0$, let $L_i$ be the set of all vertices of $G$
whose shortest distance from $v$ is $i$. For any $i,r$, the subgraph
induced by $\cup_{j=1}^{r}L_{i+j}$ on $G$ has tree-width that depends
only on $r$.
To see this, first delete all edges incident on $V'$,
where $V'=\cup_{j>i+ r}L_j$ is the set of vertices whose distance from
$v$ is more than $i+r$. Then delete all vertices in $V'$. Then
contract all edges in $V'' \times V''$, where $V''=\cup_{j\le i}L_j$
are vertices whose distance from $v$ is at most $i$. We can think of
this contracting process as merging all vertices in $V''$ into $v$.
The resulting graph $H$ is a minor of $G$, so has locally bounded
tree-width. The graph $H$ has the property that any vertex is at
distance at most $r$ from $v$, so the whole graph $H$ is in the
neighborhood of radius $r$ around $v$. Hence, the graph $H$ has
tree-width that depends only on $r$. Finally, the subgraph induced by
$\cup_{j=1}^{r}L_{i+j}$ on $G$ is a minor of $H$, so it also has
tree-width that depends only on $r$. 
This idea has been used to design
approximation algorithms for hard problems
\cite{Baker1994,Eppstein2000}.

\subsubsection{Watermarking Schemes to Preserve FO Queries on Minor
Closed Classes with Locally Bounded Tree-width}
Now we prove that there exist watermarking schemes that preserve unary
FO queries on classes of structures that are closed under minors and
that have locally bounded tree-width.
We use Gaifman's locality theorem on the FO query and consider the
constituent local queries. Answer to local queries only depend on
local neighborhoods of the structure, which have bounded tree-width.
We can run automata on them and proceed as in the previous section. We have to be careful that queries run
on overlapping neighborhoods don't interfere with each other.

Let $\mathcal{K}$ be a class of structures whose Gaifman graphs belong
to a class of graphs with locally bounded tree-width and that is
closed under minors, let $G$ be a structure in $\mathcal{K}$ and let
$\psi(\overline{x},y)$ be a unary first-order query. Let $q$ be the
rank of $\psi(\overline{x},y)$ and let $\rho$ be its locality radius.
Suppose $U \subseteq V$ is the set of active elements for the query
$\psi(\overline{x},y)$. Let $c \in U$ be an active element such that
the set $U_{c} = \{b \in U \mid \mathrm{tp}_{q}^{G}(b) =
\mathrm{tp}_{q}^{G}(c)\}$ has the maximum cardinality. Due to our
choice of $c$, we get $|U_{c}| \ge \frac{|U|}{t(q,r+1)}$ (recall that
$r$ is the length of $\overline{x}$ and $t(q,r+1)$ is the possible
number of $(q,r+1)$-types). We will show that there is a number $l$
that is a constant fraction of $|U|$ such that we can hide any mark
$\mu \in \{0,1\}^{\leq l}$. Given a weight function $W$ for $G$ and a
mark $\mu \in \{0,1\}^{l}$, we select $l$ pairs of elements
$(b_{1},b_{1}'), (b_{2}, b_{2}'), \ldots, (b_{l},b_{l}')$ from $U_{c}$
and define the new weight function $W_{\mu}$ as follows:
$(W_{\mu}(b_{i}), W_{\mu}(b_{i}'))= (W(b_{i})+1, W(b_{i})-1)$ if
$\mu(i)=1$ and $(W_{\mu}(b_{i}), W_{\mu}(b_{i}'))= (W(b_{i})-1,
W(b_{i})+1)$ if $\mu(i)=0$. For all other elements, $W_{\mu}$ is same
as $W$. The new weight function is a $1$-local distortion of the old
one by construction. The difficulty is to ensure that the global
distortion is bounded by a constant. We overcome this difficulty by
ensuring that $b_{i}$ and $b_{i}'$ cannot be distinguished by the
query $\psi(\overline{x},y)$: for almost all $\overline{a} \in V^{r}$,
$b_{i} \in \psi(\overline{a},G)$ iff $b_{i}' \in
\psi(\overline{a},G)$.
The following lemma suggests how to select such pairs.

\begin{lemma}
    \label{lem:indistinguishableLocalFormulas}
    Suppose $\psi(\overline{x},y)$ is a query and
    $\psi_{1}^{(\rho)}(\overline{x},y),
    \psi_{2}^{(\rho)}(\overline{x},y), \ldots,
    \psi_{\alpha}^{(\rho)}(\overline{x},y)$ are the local formulas given by
    Theorem~\ref{thm:gaifman} (Gaifman's locality theorem). Suppose
    $G$ is a structure and $\overline{a}\in V^{r}, b,b'\in V$ are such
    that $G\models \psi_{i}^{(\rho)}(\overline{a},b)$ iff $G\models
    \psi_{i}^{(\rho)}(\overline{a},b')$ for every $i \in \{1,2,
    \ldots, \alpha\}$. Then $b \in \psi(\overline{a},G)$ iff $b' \in
    \psi(\overline{a},G)$.
\end{lemma}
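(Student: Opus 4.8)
The plan is to exploit Gaifman's locality theorem (Theorem~\ref{thm:gaifman}) directly. By that theorem, $\psi(\overline{x},y)$ is equivalent to a Boolean combination of the local formulas $\psi_1^{(\rho)}(\overline{x},y),\ldots,\psi_\alpha^{(\rho)}(\overline{x},y)$ around the free variables and of \emph{sentences} $\chi_1,\ldots,\chi_\beta$ of the basic local form displayed in the theorem. The key observation is that the truth of each sentence $\chi_j$ in $G$ does not depend on the particular tuple substituted for the free variables: a sentence has no free variables, so $G\models\chi_j$ or $G\not\models\chi_j$ is a fixed fact about $G$ alone. Hence, for a fixed structure $G$ and a fixed parameter tuple $\overline{a}$, whether $G\models\psi(\overline{a},b)$ is determined by the Boolean combination evaluated on the bits $G\models\psi_i^{(\rho)}(\overline{a},b)$ together with the fixed bits $G\models\chi_j$.

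The argument then proceeds as follows. Fix $G$, $\overline{a}\in V^r$, and $b,b'\in V$ as in the hypothesis, so that $G\models\psi_i^{(\rho)}(\overline{a},b)$ iff $G\models\psi_i^{(\rho)}(\overline{a},b')$ for every $i\in\{1,\ldots,\alpha\}$. Write $\psi(\overline{x},y)\equiv \Theta\bigl(\psi_1^{(\rho)}(\overline{x},y),\ldots,\psi_\alpha^{(\rho)}(\overline{x},y),\chi_1,\ldots,\chi_\beta\bigr)$, where $\Theta$ is a fixed Boolean function. Substituting $\overline{a}$ for $\overline{x}$ and either $b$ or $b'$ for $y$, the arguments fed to $\Theta$ are: the $\alpha$ bits coming from the local formulas, which agree for $b$ and $b'$ by hypothesis; and the $\beta$ bits coming from the sentences $\chi_j$, which are independent of the substitution entirely. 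Therefore $\Theta$ receives identical inputs in the two cases, so $G\models\psi(\overline{a},b)$ iff $G\models\psi(\overline{a},b')$, which is exactly the statement $b\in\psi(\overline{a},G)$ iff $b'\in\psi(\overline{a},G)$.

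There is essentially no hard step here: the proof is a direct unwinding of the syntactic shape guaranteed by Gaifman's theorem, and the only thing to be careful about is to keep track of the distinction between the local formulas (which have $\overline{x},y$ free and hence can discriminate between $b$ and $b'$) and the local sentences (which cannot). One minor point worth spelling out is that Gaifman's theorem as stated produces local formulas around the \emph{whole} tuple of free variables; since our query has free variables $\overline{x}$ of length $r$ and the single variable $y$, the local formulas are of the form $\psi_i^{(\rho)}(\overline{x},y)$ as written in the lemma statement, so the invocation is literal. No induction, no locality-radius bookkeeping, and no appeal to the tree-width or minor-closure hypotheses is needed for this lemma — those will enter only in the subsequent steps of the overall construction, when we must actually select the pairs $(b_i,b_i')$ so that the local-formula bits agree for all but a bounded number of parameters $\overline{a}$.
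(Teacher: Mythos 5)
Your proposal is correct and follows essentially the same route as the paper's own proof: write $\psi$ as a Boolean combination of the local formulas and local sentences via Gaifman's theorem, observe the sentences' truth values are fixed facts about $G$ independent of the assignment, and conclude that the two evaluations of the Boolean combination coincide since the local-formula bits agree by hypothesis. No gap to report.
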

\begin{proof}
    Let $B(\psi_{1}^{(\rho)}(\overline{x},y),
    \psi_{2}^{(\rho)}(\overline{x},y), \ldots,
    \psi_{\alpha}^{(\rho)}(\overline{x},y), \chi_{1},\chi_{2}, \ldots,
    \chi_{\beta})$ be a formula equivalent to $\psi(\overline{x},y)$, as given
	by Gaifman's locality theorem (Theorem~\ref{thm:gaifman}) where $B$ is a Boolean formula
	and $\chi_{1}, \chi_{2}, \ldots, \chi_{\beta}$ are sentences that completely
	ignore the free variables $\overline{x},y$. Let $\llbracket
	\psi(\overline{a},b) \rrbracket$ denote the truth value of
	$\psi(\overline{x},y)$ in $G$ when $\overline{a}$ is assigned to $\overline{x}$
	and $b$ is assigned to $y$; similarly for other formulas.
    \begin{align*}
      \llbracket \psi(\overline{a},b)\rrbracket & =
    B(\llbracket
    \psi_{1}^{(\rho)}(\overline{a},b)\rrbracket, \llbracket
    \psi_{2}^{(\rho)}(\overline{a},b)\rrbracket, \ldots, \llbracket
    \psi_{\alpha}^{(\rho)}(\overline{a},b)\rrbracket, \llbracket
    \chi_{1}\rrbracket, \llbracket
    \chi_{2}\rrbracket, \ldots, \llbracket
    \chi_{\beta}\rrbracket)\\
      \llbracket \psi(\overline{a},b')\rrbracket & =
    B(\llbracket
    \psi_{1}^{(\rho)}(\overline{a},b')\rrbracket, \llbracket
    \psi_{2}^{(\rho)}(\overline{a},b')\rrbracket, \ldots, \llbracket
    \psi_{\alpha}^{(\rho)}(\overline{a},b')\rrbracket, \llbracket
    \chi_{1}\rrbracket, \llbracket
    \chi_{2}\rrbracket, \ldots, \llbracket
    \chi_{\beta}\rrbracket)
    \end{align*}
    From the hypothesis, $\llbracket
    \psi_{i}^{(\rho)}(\overline{a},b)\rrbracket =
    \llbracket\psi_{i}^{(\rho)}(\overline{a},b')\rrbracket$ for every
    $i \in \{1,2,\ldots, \alpha\}$. Hence the result follows.
\end{proof}

Now our goal is to select a large number of pairs $(b,b')$ from
$U_{c}$ such that they cannot be distinguished by any local query
$\psi_{i}^{(\rho)}(\overline{x},y)$, as assumed in
Lemma~\ref{lem:indistinguishableLocalFormulas}.  Let us fix some $k
\ge 1$ and apply Lemma~\ref{lem:cliqueDecompositionTransduction} to
every local query $\psi_{i}^{(\rho)}(\overline{x},y)$. We get a MSO
formula $\tilde{\psi_{i}}(\overline{x},y)$ such that for every
structure $G'$ with a parse tree $\mathcal{T}$ of clique-width at most
$k$, $\psi_{i}^{(\rho)}(G') = \tilde{\psi_{i}}(\mathcal{T})$.  Our
next goal is to identify substructures of $G$ with bounded
clique-width.  Since we are considering structures of bounded local
tree-width, any neighborhood of $G$ of bounded radius has bounded
tree-width, hence bounded clique-width.

For the MSO formulas
$\tilde{\psi_{1}}(\overline{x},y),\tilde{\psi_{2}}(\overline{x},y),
\ldots, \tilde{\psi_{\alpha}}(\overline{x},y)$, let $A_{1}, A_{2},
\ldots, A_{\alpha}$ be the corresponding tree automata. Let $A$ be the
tree automaton obtained by applying the usual product construction to
$A_{1}, A_{2}, \ldots, A_{\alpha}$ and let $m$ be the number of states
in $A$.

We pick some element $v \in V$ arbitrarily from the universe of $G$,
let $L_{0} = \{v\}$, and then define the layer $L_i$ to be the
elements of $G$ which are at a distance exactly $i$ from $v$. This
divides the graph into layers $L_0, L_1, L_2, \ldots$. For a layer
$L_j$, define the band $B_{2\rho}({L_j})$ to be the union of the
layers $L_{j-2\rho}, L_{j-{2\rho}+1}, \ldots, L_j, \ldots,
L_{j+{2\rho}-1}, L_{j+{2\rho}}$. Intuitively, $B_{2\rho}(L_{j})$
consists of the layer $L_{j}$, $2\rho$ layers to the left of $L_{j}$ and $2\rho$ layers
to the right. Let $\theta = (2 (r + 1) + 2) \rho$ and define the
band $B_{\theta}(L_i)$ in an analogous way. For $0 \leq i \leq 2\theta$,
define $\mathcal{L}_i$ to be the set of layers $\{L_i, L_{i +
2\theta+1}, \ L_{i + 4\theta+2}, \ldots \}=\{L_{i +2j\theta+j} \mid j
\ge 0\}$. Since there are $2\theta+1$
such sets, it must be the case that there is some $\mathcal{L}_i$ such
that $|U_{c} \cap (\cup \mathcal{L}_{i})| \ge
\frac{|U_{c}|}{2\theta+1}$.  We denote by $Y_1, Y_2 \ldots$ the layers
in this $\cup\mathcal{L}_i$ in increasing order of their distance from
$L_0$. If $v$ is any element in $L_{j}$, then $S_{2\rho}(v) \subseteq
B_{2\rho}(L_{j})$. Notice that by construction, $B_{2\rho}(Y_i) \cap
B_{2\rho}(Y_{j}) = \emptyset = B_{\theta}(Y_i) \cap B_{\theta}(Y_j)$
for any $i \ne j$. 
\begin{figure}
    \begin{center}
        \begin{tikzpicture}[>=stealth]
    \node[coordinate] (o1) at (0cm,0cm) {};
    \node[coordinate] (o12) at ([xshift=0.7cm]o1) {};
    \node[coordinate] (o2) at ([xshift=1.6cm]o1) {};
    \node[coordinate] (o3) at ([xshift=1.9cm]o2) {};
    \node[coordinate] (o4) at ([xshift=1.9cm]o3) {};
    \node[coordinate] (o5) at ([xshift=0.2cm]o4) {};
    \node[coordinate] (o6) at ([xshift=2.3cm]o5) {};
    \node[coordinate] (o7) at ([xshift=2.3cm]o6) {};

    \node[coordinate] (p1) at ([yshift=-0.5cm]o1) {};
    \node[coordinate] (p12) at (p1-|o12) {};
    \node[coordinate] (p2) at (p1-|o2) {};
    \node[coordinate] (p3) at (p1-|o3) {};
    \node[coordinate] (p4) at (p1-|o4) {};
    \node[coordinate] (p5) at (p1-|o5) {};
    \node[coordinate] (p6) at (p1-|o6) {};
    \node[coordinate] (p7) at (p1-|o7) {};

    \node (q1) at ([yshift=0.25cm]o1) {$L_{0}$};
    \node at ([yshift=0.25cm]o12) {$L_{1}$};
    \node at ([yshift=0.25cm]o3) {$L_{i+2j\theta+j}$};
    \node at ([yshift=0.25cm]o6) {$L_{i+2(j+1)\theta+j+1}$};


    \node[label=180:$v$,fill=black,circle,inner sep=0cm, minimum
    size=0.08cm] at (barycentric cs:o1=1,p1=1) {};

    \draw[gray!30] (o12) -- (p12);
    \draw[gray!30] (o2) -- (p2);
    \draw[gray!30] ([xshift=-0.2cm]o3) -- ([xshift=-0.2cm]p3);
    \draw (o3) -- (p3);
    \draw[gray!30] ([xshift=0.2cm]o3) -- ([xshift=0.2cm]p3);
    \draw[gray!30] (o4) -- (p4);
    \draw[gray!30] (o5) -- (p5);
    \draw[gray!30] ([xshift=-0.2cm]o6) -- ([xshift=-0.2cm]p6);
    \draw (o6) -- (p6);
    \draw[gray!30] ([xshift=0.2cm]o6) -- ([xshift=0.2cm]p6);
    \draw[gray!30] (o7) -- (p7);

    \draw[dotted,gray] ([xshift=0.2cm]barycentric cs:o12=1,p12=1) -- ([xshift=-0.2cm]barycentric cs:o2=1,p2=1);
    \draw[dotted,gray] (barycentric cs:o2=1,p2=1) --
    ([xshift=-0.2cm]barycentric cs:o3=1,p3=1);
    \draw[dotted,gray] ([xshift=0.2cm]barycentric cs:o3=1,p3=1) --
    (barycentric cs:o4=1,p4=1);
    \draw[dotted,gray] (barycentric cs:o5=1,p5=1) --
    ([xshift=-0.2cm]barycentric cs:o6=1,p6=1);
    \draw[dotted,gray] ([xshift=0.2cm]barycentric cs:o6=1,p6=1) --
    (barycentric cs:o7=1,p7=1);
    \draw[dotted,gray] ([xshift=0.2cm]barycentric cs:o7=1,p7=1) --
    ([xshift=0.5cm]barycentric cs:o7=1,p7=1);

    \draw[<->] ([yshift=-0.3cm]p2) -- node[auto=left] {\scriptsize $\theta$ layers} ([xshift=-0.2cm,yshift=-0.3cm]p3);
    \draw[<->] ([xshift=0.2cm,yshift=-0.3cm]p3) -- node[auto=left] {\scriptsize $\theta$ layers} ([yshift=-0.3cm]p4);
    \draw[<->] ([yshift=-0.3cm]p5) -- node[auto=left] {\scriptsize $\theta$ layers} ([xshift=-0.2cm,yshift=-0.3cm]p6);
    \draw[<->] ([xshift=0.2cm,yshift=-0.3cm]p6) -- node[auto=left] {\scriptsize $\theta$ layers} ([yshift=-0.3cm]p7);

    \draw[decorate,decoration=brace] ([yshift=-0.5cm]p4) --
    node[auto=left] {$B_{\theta}(L_{i+2j\theta+j})$} ([yshift=-0.5cm]p2);
    \draw[decorate,decoration=brace] ([yshift=-0.5cm]p7) --
    node[auto=left] {$B_{\theta}(L_{i+2(j+1)\theta+j+1})$} ([yshift=-0.5cm]p5);
\end{tikzpicture}
    \end{center}
    \caption{Division of Gaifman's graph of $G$ into Bands and layers.}
    \label{fig:layerConstruction}
\end{figure}
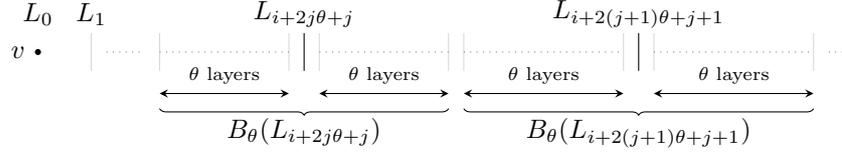
Refer to Fig.~\ref{fig:layerConstruction} for a visual representation
of the bands. The layer $L_{0}$ is represented by the single vertex
$v$. The layers $L_{i+2j\theta +j}, L_{i+2(j+1)\theta + j +1}$ are
represented by solid vertical lines. Other layers are represented by
vertical lines that are grayed out.

In the sequence of layers that we obtained, let $Y'_1, Y'_2, \ldots
Y'_{\gamma}$ be those that contain at least $2m^m+2$ elements from $U_c$.
Let $Y''_1, Y''_{2}, \ldots, Y''_{\delta}$ be the layers that contain
less than $2m^m+2$ elements from $U_{c}$. Let $v''_1, v''_2, \ldots,
v''_\delta$ be arbitrarily chosen elements of $Y''_1, Y''_{2},\ldots,
Y''_{\delta}$ respectively that
are in $U_c$ (we may ignore a particular $Y''_i$ if it does not
contain any elements of $U_c$ in it).  We will use the set of pairs
$M_{1} = \{(v''_1,v''_2), \ldots, (v''_{\delta-1}, v''_\delta)\}$ for
watermarking.

Next we select watermarking pairs from the layers $Y'_1, Y'_2, \ldots
Y'_\gamma$. For each layer $Y'_{i}$, let $N_i$ be the substructure
induced by the band $B_{\theta}(Y'_{i})$. This is a band of width
$2\theta+1$, so its tree-width and hence clique-width (say $k$) depends
only on $2 \theta+1$, which in turn depends only on the locality radius
$\rho$ and the input length $r$. Now we can apply
Lemma~\ref{lem:treeaut} with the tree automaton $A$ and the parse tree
$\mathcal{T}$ of $N_{i}$ of clique width at most $k$, setting $Y=Y'_i
\cap U_{c}$. We get pairs $(b_{(i,1)}, b'_{(i,1)}), (b_{(i,2)},
b'_{(i,2)}), \ldots, (b_{(i,n)}, b'_{(i,n)})$, where
$n=\lfloor\frac{|Y'_{i}\cap U_{c}|}{4m^m+4}\rfloor$. We will use the
set of pairs $M_{2} = \cup_{i} \cup_{j}\{(b_{(i,j)}, b'_{(i,j)})\}$
also for watermarking. Again note that all elements in the pairs are
in $U_{c}$.

\begin{lemma}
    \label{lem:noDistortionSparse}
    Suppose a watermarking pair $(v''_i, v''_{(i+1)}) \in M_{1}$
    consists of elements from $Y''_i,Y''_{(i+1)}$
    respectively. If the tuple $\overline{a}=\langle a_{1}, \ldots,
    a_{r}\rangle$ is such that $\{a_{1}, \ldots, a_{r}\} \cap (B_{2\rho}(Y''_{i})
    \cup B_{2\rho}(Y''_{(i+1)})) = \emptyset$, then $v''_{i} \in
    \psi(\overline{a},G)$ iff $v''_{(i+1)} \in \psi(\overline{a},G)$.
\end{lemma}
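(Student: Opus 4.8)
The plan is to show that the two elements $v''_i \in Y''_i$ and $v''_{(i+1)} \in Y''_{(i+1)}$ are indistinguishable by every local Gaifman formula $\psi_j^{(\rho)}(\overline{x},y)$ under the parameter $\overline{a}$, and then invoke Lemma~\ref{lem:indistinguishableLocalFormulas} to conclude that $v''_i \in \psi(\overline{a},G)$ iff $v''_{(i+1)} \in \psi(\overline{a},G)$. Since both $v''_i$ and $v''_{(i+1)}$ lie in $U_c$, they have the same $(q,1)$-type in $G$, i.e.\ $\mathrm{tp}_q^G(v''_i)=\mathrm{tp}_q^G(v''_{(i+1)})$. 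The hypothesis $\{a_1,\ldots,a_r\}\cap(B_{2\rho}(Y''_i)\cup B_{2\rho}(Y''_{(i+1)}))=\emptyset$ guarantees that every element of $\overline{a}$ is far from both $v''_i$ and $v''_{(i+1)}$: indeed, since $v''_i\in Y''_i$ we have $S_{2\rho}(v''_i)\subseteq B_{2\rho}(Y''_i)$, so $d(\overline{a},v''_i)\ge 2\rho+1$, and similarly $d(\overline{a},v''_{(i+1)})\ge 2\rho+1$.

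The key step is then a direct application of Corollary~\ref{cor:typesLocality}. We need the three hypotheses of that corollary with the roles $\overline{a}=\overline{a}'$ (the same parameter tuple), $\overline{b}=\langle v''_i\rangle$, and $\overline{b}'=\langle v''_{(i+1)}\rangle$: first, $\mathrm{tp}_q^G(\overline{a})=\mathrm{tp}_q^G(\overline{a})$ trivially; second, $\mathrm{tp}_q^G(v''_i)=\mathrm{tp}_q^G(v''_{(i+1)})$, which holds because both are in $U_c$; third, $d(\overline{a},v''_i)\ge 2\rho+1$ and $d(\overline{a},v''_{(i+1)})\ge 2\rho+1$, established in the previous paragraph. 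Corollary~\ref{cor:typesLocality} then yields $\mathrm{tp}_q^G(\overline{a},v''_i)=\mathrm{tp}_q^G(\overline{a},v''_{(i+1)})$. In particular, for each local formula $\psi_j^{(\rho)}(\overline{x},y)$ appearing in the Gaifman normal form of $\psi(\overline{x},y)$ from Theorem~\ref{thm:gaifman}, its quantifier rank is at most $q$ (the rank of $\psi$), so $\psi_j^{(\rho)}$ belongs to this type or not uniformly: $G\models\psi_j^{(\rho)}(\overline{a},v''_i)$ iff $G\models\psi_j^{(\rho)}(\overline{a},v''_{(i+1)})$ for every $j$.

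Having established that, Lemma~\ref{lem:indistinguishableLocalFormulas} applies verbatim with $b=v''_i$ and $b'=v''_{(i+1)}$, giving $v''_i\in\psi(\overline{a},G)$ iff $v''_{(i+1)}\in\psi(\overline{a},G)$, which is the claim. The only point that needs care is verifying the distance bound $2\rho+1$ precisely rather than merely ``far'': the band $B_{2\rho}(L_j)$ is defined to contain $L_j$ together with $2\rho$ layers on each side, so an element at distance $\le 2\rho$ from a vertex in $L_j$ necessarily lands in a layer within $2\rho$ of $L_j$ and hence in $B_{2\rho}(L_j)$; contrapositively, avoiding $B_{2\rho}(Y''_i)$ forces distance at least $2\rho+1$ from every vertex of $Y''_i$, in particular from $v''_i$. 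I expect this layer-to-distance translation — making sure the constant in the band width lines up exactly with the $2\rho+1$ threshold required by Corollary~\ref{cor:typesLocality} — to be the only real thing to check; the rest is a chain of citations.
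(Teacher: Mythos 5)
Your argument is essentially the paper's proof: the paper also observes that avoiding $B_{2\rho}(Y''_i)\cup B_{2\rho}(Y''_{(i+1)})$ forces $S_{\rho}(\overline{a})$ to be disjoint from $S_{\rho}(v''_i)\cup S_{\rho}(v''_{(i+1)})$ (i.e.\ distance at least $2\rho+1$), and then concludes by Corollary~\ref{cor:typesLocality}, using implicitly that $v''_i,v''_{(i+1)}\in U_c$ share the same $(q,1)$-type; your explicit layer-to-distance check is exactly the right thing to verify.

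One remark on your last step: the detour through the Gaifman local formulas $\psi_j^{(\rho)}$ and Lemma~\ref{lem:indistinguishableLocalFormulas} is unnecessary, and as written it rests on the unjustified claim that each $\psi_j^{(\rho)}$ has quantifier rank at most $q$ --- the Gaifman normal form can increase quantifier rank, so membership in the $(q,r+1)$-type need not decide those formulas. The clean finish is direct: since $\mathrm{qr}(\psi)=q$, the equality $\mathrm{tp}_q^G(\overline{a},v''_i)=\mathrm{tp}_q^G(\overline{a},v''_{(i+1)})$ furnished by Corollary~\ref{cor:typesLocality} already gives $G\models\psi(\overline{a},v''_i)$ iff $G\models\psi(\overline{a},v''_{(i+1)})$, which is the claim. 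With that detour removed, the proof matches the paper's.
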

\begin{proof}
    We have $S_{\rho}(\overline{a}) \cap (S_{\rho}(v''_{i}) \cup
    S_{\rho}(v''_{(i+1)})) = \emptyset$. Hence we can apply
    Corollary~\ref{cor:typesLocality} to infer the result.
\end{proof}

\begin{lemma}
  \label{lem:noDistortionDense}
    Suppose a watermarking pair $(b,b') \in M_{2}$ was selected from
    some set $V_{j}$ (as specified in Lemma~\ref{lem:treeaut}) of some
    band $N_{i}$. If $\overline{a}=\langle a_{1}, \ldots,
    a_{r}\rangle$ is such that $\{a_{1}, \ldots, a_{r}\} \cap V_{j} =
    \emptyset$, $b \in \psi(\overline{a},G)$ iff $b' \in
    \psi(\overline{a},G)$.
\end{lemma}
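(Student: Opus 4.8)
The plan is to reduce, via Lemma~\ref{lem:indistinguishableLocalFormulas}, to the local formulas of the Gaifman normal form: it suffices to show that for every $\ell\in\{1,\dots,\alpha\}$ we have $G\models\psi_\ell^{(\rho)}(\overline a,b)$ iff $G\models\psi_\ell^{(\rho)}(\overline a,b')$. Fix such an $\ell$. Since $\psi_\ell^{(\rho)}$ is $\rho$-local, its truth at $(\overline a,b)$ depends only on $N_\rho^G(\overline a b)$, and since $b,b'$ lie in the central layer $Y'_i$ of the band $B_\theta(Y'_i)$ inducing $N_i$, the part of $N_\rho^G(\overline a b)$ attached to $b$ will lie inside $N_i$, where the automaton $A$ is available, while the rest is far from both $b$ and $b'$. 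So the strategy is to split, run the automaton on the part inside $N_i$, and glue back.

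For the split: the connected component of $b$ in $N_\rho^G(\overline a b)$ is a union of at most $r+1$ balls of radius $\rho$, chained together and anchored at $b\in Y'_i$, hence contained in $B_{(2r+1)\rho}(Y'_i)\subseteq B_\theta(Y'_i)$ -- this is exactly what the value $\theta=(2(r+1)+2)\rho$ is designed to ensure, and the same bound holds for $b'$. Using this, together with the fact that on $B_\theta(Y'_i)$ distances up to $\rho$ and induced relations agree with those of $N_i$, and that a parameter lies in the same component as its whole $\rho$-ball, one obtains a decomposition $N_\rho^G(\overline a b)=N_\rho^{N_i}(\overline a_S b)\sqcup N_\rho^G(\overline a_R)$ as a disjoint union of structures (no shared elements, no edges between), where $\overline a=\overline a_S\overline a_R$ is the partition of the parameters according to whether the connected component of $a_k$ in $N_\rho^G(\overline a)$ stays inside $B_\theta(Y'_i)$; crucially the summand $N_\rho^G(\overline a_R)$ -- the union of the components reaching outside the band -- and the partition $\overline a=\overline a_S\overline a_R$ itself are literally the same whether one starts from $b$ or from $b'$, because the component of $b$ (and of $b'$) stays inside the band and so cannot merge with any of them.

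Now apply Lemma~\ref{lem:treeaut} to $A$ and the parse tree $\mathcal T$ of $N_i$ with the set $V_j$ from which $(b,b')$ was drawn. Since $\{a_1,\dots,a_r\}\cap V_j=\emptyset$, the entries of $\overline a_S$ avoid $V_j$; pad $\overline a_S$ to a length-$r$ tuple $\overline a^*$ over $B_\theta(Y'_i)\setminus V_j$ by repeating an entry of $\overline a_S$ (the degenerate case $\overline a_S=\emptyset$ is handled separately, reducing to $\mathrm{tp}^G_q(b)=\mathrm{tp}^G_q(b')$, which holds as $b,b'\in U_c$, provided $q$ is taken large enough). Running $A$ on $\mathcal T_{\overline a^* b}$ and on $\mathcal T_{\overline a^* b'}$, Lemma~\ref{lem:treeaut} forces the same root state; choosing each $A_\ell$ to record the $\mathrm{MSO}$-type of the subtree up to $\mathrm{qr}(\tilde\psi_\ell)$ (a legitimate choice of recognizer for $\tilde\psi_\ell$), this pins down that type for all of $\mathcal T_{\overline a^* b}$, so by Lemma~\ref{lem:cliqueDecompositionTransduction} the structures $(N_i,\overline a^*,b)$ and $(N_i,\overline a^*,b')$ have equal first-order types up to $\max_\ell\mathrm{qr}(\psi_\ell^{(\rho)})$. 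Restricting to $\rho$-local formulas, this transfers to $(N_\rho^{N_i}(\overline a_S b),\overline a_S,b)$ and $(N_\rho^{N_i}(\overline a_S b'),\overline a_S,b')$; and since the remaining summand $N_\rho^G(\overline a_R)$ is common to both and detached, a Feferman--Vaught composition for disjoint unions gives $G\models\psi_\ell^{(\rho)}(\overline a,b)$ iff $G\models\psi_\ell^{(\rho)}(\overline a,b')$. As $\ell$ was arbitrary, Lemma~\ref{lem:indistinguishableLocalFormulas} yields $b\in\psi(\overline a,G)$ iff $b'\in\psi(\overline a,G)$.

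I expect the main obstacle to be the passage, in the last step, from what Lemma~\ref{lem:treeaut} literally provides -- equality of the product automaton's root state -- to the type-level agreement of the band piece $N_\rho^{N_i}(\overline a_S b)$ that the Feferman--Vaught gluing needs; this is what forces working with type automata rather than bare acceptors and tracking quantifier ranks through Lemma~\ref{lem:cliqueDecompositionTransduction}. Further points needing care are the arity bookkeeping (padding $\overline a_S$ to length $r$ inside $B_\theta(Y'_i)\setminus V_j$, and the empty case), taking $q$ in the definition of $U_c$ large enough to dominate the ranks of the Gaifman local formulas, and the quantitative check that $\theta=(2(r+1)+2)\rho$ is large enough both to contain the component of $b$ (and of $b'$) and to keep the $\rho$-balls of all parameters in $\overline a_S$ inside the band.
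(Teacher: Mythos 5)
Your proposal is correct in outline but takes a genuinely different route from the paper's at the crucial combination step. The paper argues by cases and never needs type-level information: in the hard case it splits $\overline a$ into a near part $\overline a_1$ and a far part $\overline a_2$ using the $r+1$ buffer regions $C_1,\dots,C_{r+1}$, builds an auxiliary structure $H=H_1\sqcup H_2$ with $H_1\cong N_i$ and $H_2$ a bounded-clique-width copy of $N_\rho(\overline a_2)$, joins the parse tree of $N_i$ (the one on which Lemma~\ref{lem:treeaut} was applied) and a parse tree of $H_2$ under a union root, and observes that root-state agreement on the $N_i$-part propagates to the root of the combined tree; since $A$ is the product of acceptors for the $\tilde\psi_\ell$, this already yields $H\models\psi_\ell^{(\rho)}(h(\overline a),h(b))$ iff $H\models\psi_\ell^{(\rho)}(h(\overline a),h(b'))$, and $\rho$-locality transfers this to $G$. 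You instead stay inside $G$, decompose $N_\rho^G(\overline a b)$ as a disjoint union of a band part and a common far part, and recombine by Feferman--Vaught; this is sound (your containment of the component of $b$ in $B_{(2r+1)\rho}(Y'_i)\subseteq B_\theta(Y'_i)$, the resulting stability of the split $\overline a=\overline a_S\overline a_R$, and the identification $N_\rho^G(\overline a_S b)=N_\rho^{N_i}(\overline a_S b)$ all check out), but it forces you to extract FO-type agreement of $(N_i,\overline a_S,b)$ versus $(N_i,\overline a_S,b')$ from Lemma~\ref{lem:treeaut}, which bare acceptance automata do not give. Your fix --- instantiate each $A_\ell$ as a type automaton --- is legitimate, with two consequences you only partially account for: the recorded rank must exceed $\mathrm{qr}(\tilde\psi_\ell)$ by enough to absorb both the quantification over pebbled positions and the relativization of rank-$\mathrm{qr}(\psi_\ell^{(\rho)})$ formulas to $\rho$-neighborhoods (still a constant, so scalability is unharmed, but the automaton $A$ and hence the constant $4m^m+4$ in the construction of $M_2$ change); and your degenerate case $\overline a_S=\emptyset$ does not actually require enlarging $q$ in the definition of $U_c$, since there all parameters lie at distance greater than $2\rho$ from $b$ and $b'$ and Corollary~\ref{cor:typesLocality} applies directly, exactly as in Lemma~\ref{lem:noDistortionSparse}. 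In short, the paper's auxiliary-structure trick buys independence from the particular choice of recognizers for the $\tilde\psi_\ell$, while your route buys a more uniform, case-free argument at the cost of richer automata and careful rank bookkeeping.
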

\begin{proof}
    \emph{Case I:} $\{a_1, \ldots, a_r\} \cap B_{2\rho}(Y'_i) =
    \emptyset$. In this case, since $b,b'$ are both on the layer $Y'_i$, we
    have $S_{\rho}(\overline{a}) \cap (S_{\rho}(b) \cup  S_{\rho}(b'))
    = \emptyset$. Hence we can apply Corollary~\ref{cor:typesLocality}
    to infer the result.

    \emph{Case II:} $S_{\rho}(\overline{a}) \subseteq B_{(2(r+1)+2)\rho}(Y'_i)$. In this case,
    $S_{\rho}(\overline{a}bb') \subseteq B_{(2(r+1)+2)\rho}(Y'_i) = B_{\theta}(Y'_i)$.
    We selected $(b,b')$ according to
    Lemma~\ref{lem:treeaut}, with the tree automaton $A$ running on a
    parse tree of $N_i$. Since the tree automaton runs all the
    automata $A_{1}, A_{2}, \ldots, A_{\alpha}$ simultaneously, we
    infer that $N_i \models \psi^{(\rho)}_{j}(\overline{a},b)$ iff
    $N_i \models \psi^{(\rho)}_{j}(\overline{a},b')$ for every
    $j \in \{1,2, \ldots, \alpha\}$. Since $S_{\rho}(\overline{a}bb')
    \subseteq  B_{\theta}(Y'_i)$, the substructure induced on
    $N_i$ by $S_{\rho}(\overline{a}bb')$ is isomorphic to the
    substructure induced on $G$ by $S_{\rho}(\overline{a}bb')$. Since
    $\psi_{j}^{(\rho)}(\overline{x},y)$ is a local formula around
    $\overline{x},y$ with locality radius $\rho$, we infer that $N_i
    \models \psi^{(\rho)}_{j}(\overline{a},b)$ iff $G \models
    \psi^{(\rho)}_{j}(\overline{a},b)$ and $N_i
    \models \psi^{(\rho)}_{j}(\overline{a},b')$ iff $G \models
    \psi^{(\rho)}_{j}(\overline{a},b')$ for every
    $j \in \{1,2, \ldots, \alpha\}$. Hence, $G \models \psi^{(\rho)}_{j}(\overline{a},b)$ iff
    $G \models \psi^{(\rho)}_{j}(\overline{a},b')$ for every
    $j \in \{1,2, \ldots, \alpha\}$. We can now apply
    Lemma~\ref{lem:indistinguishableLocalFormulas} to infer the
    result.

    \emph{Case III:} $\{a_1, \ldots,a_r\} \cap B_{2\rho}(Y'_i) \neq
    \emptyset$ and $\{a_1, \ldots,a_r\} \not \subseteq
B_{(2(r+1)+2)\rho}(Y'_i)$. In this case, some elements of
$\overline{a}$ may be within distance $2\rho$ from $b, b'$. Some
elements of $\overline{a}$ may be quite far and their $\rho$
neighborhoods may not be included in $B_{\theta}(Y'_i)$. We
divide $B_{\theta}(Y'_i)\setminus B_{2{\rho}}(Y'_i)$ into
$r+1$ regions $C_1, C_2, \ldots C_{r+1}$. Define $C_1 = B_{4\rho}(Y'_i)
\setminus B_{2\rho}(Y'_i), C_2 = B_{6\rho}(Y'_i) \setminus B_{4\rho}(Y'_i)$, etc. Since
there are $r + 1$ such regions, and only $r$ parameters in
$\overline{a}$, there is a region, say $C_j$ that doesn't contain any
elements of $\overline{a}$. Let $\overline{a_{1}}$ be the tuple of
elements of $\overline{a}$ that are in $B_{2 \rho}(Y'_i) \cup C_{1}
\cup \cdots \cup C_{j-1}$ and let $\overline{a_{2}}$ consist of the
remaining elements of $\overline{a}$. Note that $S_{\rho}(\overline{a_{1}}bb') \cap
S_{\rho}(\overline{a_{2}}) = \emptyset$ (since the region $C_{j}$ is
of width $2\rho$, $\overline{a_{1}}bb'$ are on the inside of this band
and $\overline{a_{2}}$ are on the outside).
    \begin{figure}
      \begin{center}
         \begin{tikzpicture}
\node[coordinate] (o) at (0cm,0cm) {};
\node at ([yshift=0.3cm]o) {$Y_{i}'$};
\node[coordinate] (p) at ([yshift=-1.1cm]o) {};

\draw (o) -- (p);

\node[label=180:$b$,fill=black,circle,inner sep=0cm, minimum size=0.08cm] at (barycentric cs:o=3,p=1) {};
\node[label=180:$b'$,fill=black,circle,inner sep=0cm, minimum size=0.08cm] at (barycentric cs:o=1,p=3) {};

\node[coordinate] (o1) at ([xshift=0.9cm]o) {};
\node[coordinate] (p1) at (p-|o1) {};
\draw[<->] ([xshift=0.1cm]p) -- ([xshift=-0.1cm]p1);
\node at ([yshift=-0.25cm]barycentric cs:p=1,p1=1) {$2\rho$};

\draw[dashed] (o1) -- (p1);
\draw[gray!45] (barycentric cs:o=3,o1=1) -- (barycentric cs:p=3,p1=1);
\draw[gray!45] (barycentric cs:o=1,o1=3) -- (barycentric cs:p=1,p1=3);
\draw[dotted,gray] ([xshift=0.25cm]barycentric cs:o=1,p=1) -- ([xshift=-0.25cm]barycentric cs:o1=1,p1=1);

\node[coordinate] (o3) at ([xshift=0.7cm]o1) {};
\node[coordinate] (p3) at (p-|o3) {};
\draw[<->] ([xshift=0.1cm]p1) -- ([xshift=-0.1cm]p3);
\node at ([yshift=-0.25cm]barycentric cs:p1=1,p3=1) {$2\rho$};
\node at ([yshift=0.3cm]barycentric cs:o1=1,o3=1) {$C_1$};

\draw[dashed] (o3) -- (p3);
\draw[gray!45] (barycentric cs:o1=3,o3=1) -- (barycentric cs:p1=3,p3=1);
\draw[gray!45] (barycentric cs:o1=1,o3=3) -- (barycentric cs:p1=1,p3=3);
\draw[dotted,gray] (barycentric cs:o1=3,p1=3,o3=1,p3=1) -- (barycentric cs:o1=1,o3=3,p1=1,p3=3);

\node[coordinate] (o35) at ([xshift=0.8cm]o3) {};
\node[coordinate] (p35) at (p-|o35) {};
\draw[dashed] (o35) -- (p35);
\draw[dotted,gray] ([xshift=0.2cm]barycentric cs:o3=1,p3=1) -- ([xshift=-0.2cm]barycentric cs:o35=1,p35=1);

\node[coordinate] (o5) at ([xshift=0.7cm]o35) {};
\node[coordinate] (p5) at (p-|o5) {};
\draw[<->] ([xshift=0.1cm]p35) -- ([xshift=-0.1cm]p5);
\node at ([yshift=-0.25cm]barycentric cs:p35=1,p5=1) {$2\rho$};
\node at ([yshift=0.3cm]barycentric cs:o35=1,o5=1) {$C_{j-1}$};

\draw[dashed] (o5) -- (p5);
\draw[gray!45] (barycentric cs:o35=3,o5=1) -- (barycentric cs:p35=3,p5=1);
\draw[gray!45] (barycentric cs:o35=1,o5=3) -- (barycentric cs:p35=1,p5=3);
\draw[dotted,gray] (barycentric cs:o35=3,o5=1,p35=3,p5=1) -- (barycentric cs:o35=1,o5=3,p35=1,p5=3);

\node[coordinate] (o7) at ([xshift=0.7cm]o5) {};
\node[coordinate] (p7) at (p-|o7) {};
\draw[<->] ([xshift=0.1cm]p5) -- ([xshift=-0.1cm]p7);
\node at ([yshift=-0.25cm]barycentric cs:p5=1,p7=1) {$2\rho$};
\node at ([yshift=0.3cm]barycentric cs:o5=1,o7=1) {$C_{j}$};

\draw[dashed] (o7) -- (p7);
\draw[gray!45] (barycentric cs:o5=3,o7=1) -- (barycentric cs:p5=3,p7=1);
\draw[gray!45] (barycentric cs:o5=1,o7=3) -- (barycentric cs:p5=1,p7=3);
\draw[dotted,gray] (barycentric cs:o5=3,o7=1,p5=3,p7=1) -- (barycentric cs:o5=1,o7=3,p5=1,p7=3);

\node[coordinate] (o9) at ([xshift=0.7cm]o7) {};
\node[coordinate] (p9) at (p-|o9) {};
\draw[<->] ([xshift=0.1cm]p7) -- ([xshift=-0.1cm]p9);
\node at ([yshift=-0.25cm]barycentric cs:p7=1,p9=1) {$2\rho$};
\node at ([yshift=0.3cm]barycentric cs:o7=1,o9=1) {$C_{j+1}$};

\draw[dashed] (o9) -- (p9);
\draw[gray!45] (barycentric cs:o7=3,o9=1) -- (barycentric cs:p7=3,p9=1);
\draw[gray!45] (barycentric cs:o7=1,o9=3) -- (barycentric cs:p7=1,p9=3);
\draw[dotted,gray] (barycentric cs:o7=3,o9=1,p7=3,p9=1) -- (barycentric cs:o7=1,o9=3,p7=1,p9=3);

\node[coordinate] (o95) at ([xshift=0.8cm]o9) {};
\node[coordinate] (p95) at (p-|o95) {};
\draw[dashed] (o95) -- (p95);
\draw[dotted,gray] ([xshift=0.2cm]barycentric cs:o9=1,p9=1) -- ([xshift=-0.2cm]barycentric cs:o95=1,p95=1);

\node[coordinate] (o11) at ([xshift=0.7cm]o95) {};
\node[coordinate] (p11) at (p-|o11) {};
\draw[<->] ([xshift=0.1cm]p95) -- ([xshift=-0.1cm]p11);
\node at ([yshift=-0.25cm]barycentric cs:p95=1,p11=1) {$2\rho$};
\node at ([yshift=0.3cm]barycentric cs:o95=1,o11=1) {$C_{r+1}$};

\draw[dashed] (o11) -- (p11);
\draw[gray!45] (barycentric cs:o95=3,o11=1) -- (barycentric cs:p95=3,p11=1);
\draw[gray!45] (barycentric cs:o95=1,o11=3) -- (barycentric cs:p95=1,p11=3);
\draw[dotted,gray] (barycentric cs:o95=3,o11=1,p95=3,p11=1) -- (barycentric cs:o95=1,o11=3,p95=1,p11=3);

\node[coordinate] (o2) at ([xshift=-0.9cm]o) {};
\node[coordinate] (p2) at (p-|o2) {};
\draw[<->] ([xshift=-0.1cm]p) -- ([xshift=0.1cm]p2);
\node at ([yshift=-0.25cm]barycentric cs:p=1,p2=1) {$2\rho$};

\draw[dashed] (o2) -- (p2);
\draw[gray!45] (barycentric cs:o=3,o2=1) -- (barycentric cs:p=3,p2=1);
\draw[gray!45] (barycentric cs:o=1,o2=3) -- (barycentric cs:p=1,p2=3);
\draw[gray,dotted] ([xshift=-0.25cm]barycentric cs:o=1,p=1) --
([xshift=0.25cm]barycentric cs:o2=1,p2=1);

\node[coordinate] (o4) at ([xshift=-0.7cm]o2) {};
\node[coordinate] (p4) at (p-|o4) {};
\draw[<->] ([xshift=-0.1cm]p2) -- ([xshift=0.1cm]p4);
\node at ([yshift=-0.25cm]barycentric cs:p2=1,p4=1) {$2\rho$};

\draw[dashed] (o4) -- (p4);
\draw[gray!45] (barycentric cs:o2=3,o4=1) -- (barycentric cs:p2=3,p4=1);
\draw[gray!45] (barycentric cs:o2=1,o4=3) -- (barycentric cs:p2=1,p4=3);
\node at ([yshift=0.3cm]barycentric cs:o2=1,o4=1) {$C_1$};
\draw[dotted,gray] (barycentric cs:o2=3,o4=1,p2=3,p4=1) -- (barycentric cs:o2=1,o4=3,p2=1,p4=3);

\node[coordinate] (o45) at ([xshift=-0.8cm]o4) {};
\node[coordinate] (p45) at (p-|o45) {};
\draw[dashed] (o45) -- (p45);
\draw[dotted,gray] ([xshift=-0.2cm]barycentric cs:o4=1,p4=1) -- ([xshift=0.2cm]barycentric cs:o45=1,p45=1);

\node[coordinate] (o6) at ([xshift=-0.7cm]o45) {};
\node[coordinate] (p6) at (p-|o6) {};
\draw[<->] ([xshift=-0.1cm]p45) -- ([xshift=0.1cm]p6);
\node at ([yshift=-0.25cm]barycentric cs:p45=1,p6=1) {$2\rho$};

\draw[dashed] (o6) -- (p6);
\draw[gray!45] (barycentric cs:o45=3,o6=1) -- (barycentric cs:p45=3,p6=1);
\draw[gray!45] (barycentric cs:o45=1,o6=3) -- (barycentric cs:p45=1,p6=3);
\node at ([yshift=0.3cm]barycentric cs:o45=1,o6=1) {$C_{j-1}$};
\draw[dotted,gray] (barycentric cs:o45=3,o6=1,p45=3,p6=1) -- (barycentric cs:o45=1,o6=3,p45=1,p6=3);

\node[coordinate] (o8) at ([xshift=-0.7cm]o6) {};
\node[coordinate] (p8) at (p-|o8) {};
\draw[<->] ([xshift=-0.1cm]p6) -- ([xshift=0.1cm]p8);
\node at ([yshift=-0.25cm]barycentric cs:p6=1,p8=1) {$2\rho$};

\draw[dashed] (o8) -- (p8);
\draw[gray!45] (barycentric cs:o6=3,o8=1) -- (barycentric cs:p6=3,p8=1);
\draw[gray!45] (barycentric cs:o6=1,o8=3) -- (barycentric cs:p6=1,p8=3);
\node at ([yshift=0.3cm]barycentric cs:o6=1,o8=1) {$C_{j}$};
\draw[dotted,gray] (barycentric cs:o6=3,o8=1,p6=3,p8=1) -- (barycentric cs:o6=1,o8=3,p6=1,p8=3);

\node[coordinate] (o10) at ([xshift=-0.7cm]o8) {};
\node[coordinate] (p10) at (p-|o10) {};
\draw[<->] ([xshift=-0.1cm]p8) -- ([xshift=0.1cm]p10);
\node at ([yshift=-0.25cm]barycentric cs:p8=1,p10=1) {$2\rho$};

\draw[dashed] (o10) -- (p10);
\draw[gray!45] (barycentric cs:o8=3,o10=1) -- (barycentric cs:p8=3,p10=1);
\draw[gray!45] (barycentric cs:o8=1,o10=3) -- (barycentric cs:p8=1,p10=3);
\node at ([yshift=0.3cm]barycentric cs:o8=1,o10=1) {$C_{j+1}$};
\draw[dotted,gray] (barycentric cs:o8=3,o10=1,p8=3,p10=1) -- (barycentric cs:o8=1,o10=3,p8=1,p10=3);

\node[coordinate] (o105) at ([xshift=-0.8cm]o10) {};
\node[coordinate] (p105) at (p-|o105) {};
\draw[dashed] (o105) -- (p105);
\draw[dotted,gray] ([xshift=-0.2cm]barycentric cs:o10=1,p10=1) -- ([xshift=0.2cm]barycentric cs:o105=1,p105=1);

\node[coordinate] (o12) at ([xshift=-0.7cm]o105) {};
\node[coordinate] (p12) at (p-|o12) {};
\draw[<->] ([xshift=-0.1cm]p105) -- ([xshift=0.1cm]p12);
\node at ([yshift=-0.25cm]barycentric cs:p105=1,p12=1) {$2\rho$};

\draw[dashed] (o12) -- (p12);
\draw[gray!45] (barycentric cs:o105=3,o12=1) -- (barycentric cs:p105=3,p12=1);
\draw[gray!45] (barycentric cs:o105=1,o12=3) -- (barycentric cs:p105=1,p12=3);
\node at ([yshift=0.3cm]barycentric cs:o105=1,o12=1) {$C_{r+1}$};
\draw[dotted,gray] (barycentric cs:o105=3,o12=1,p105=3,p12=1) -- (barycentric cs:o105=1,o12=3,p105=1,p12=3);

\draw[<->] ([yshift=0.9cm,xshift=0.1cm]o6) -- ([yshift=0.9cm,xshift=-0.1cm]o5);
\node at ([yshift=1.15cm]barycentric cs:o6=1,o5=1) {$\overline{a_1}$ in these regions};

\draw[<->] ([yshift=0.9cm]o5) -- ([yshift=0.9cm]o7);
\node at ([yshift=1.15cm]barycentric cs:o5=1,o7=1) {$2\rho$};

\draw[<->] ([yshift=0.9cm,xshift=0.1cm]o7) -- ([yshift=0.9cm]o11);

\draw[<->] ([yshift=0.9cm]o6) -- ([yshift=0.9cm]o8);
\node at ([yshift=1.15cm]barycentric cs:o6=1,o8=1) {$2\rho$};

\draw[<->] ([yshift=0.9cm,xshift=-0.1cm]o8) -- ([yshift=0.9cm]o12);

\node at ([yshift=1.9cm]barycentric cs:o6=1,o5=1) {$\overline{a_2}$ in
these regions and outside};

\draw ([xshift=-2.5cm, yshift=1.9cm]barycentric cs:o6=1,o5=1) --
([yshift=1.9cm]barycentric cs:o8=1,o12=1) -- ([yshift=0.9cm]barycentric cs:o8=1,o12=1);
\draw ([xshift=2.5cm, yshift=1.9cm]barycentric cs:o6=1,o5=1) --
([yshift=1.9cm]barycentric cs:o7=1,o11=1) -- ([yshift=0.9cm]barycentric cs:o7=1,o11=1);

\end{tikzpicture}
      \end{center}
      \caption{The regions $C_1, \ldots, C_{r+1}$ and $\overline{a_1},\overline{a_2}$
used in Case III of the proof of Lemma~\ref{lem:noDistortionDense}.}
\label{fig:indistinguishableFO}
    \end{figure}
    Refer to Fig.~\ref{fig:indistinguishableFO} for a visual
    presentation of $\overline{a_{1}}, \overline{a_{2}}$. The layer
    $Y_{i}'$ is represented by a solid vertical line, in which
    $b,b'$ are highlighted. Other layers are
    represented by vertical lines that are grayed out. Boundaries of
    regions are represented by dashed vertical lines. Each region
    consists of $2\rho$ layers on the left and $2\rho$ layers on the
    right. Since the layer $C_{j}$ doesn't contain any elements of
    $\overline{a}$, it acts as a buffer between
    $S_{\rho}(\overline{a_{1}}bb')$ and $S_{\rho}(\overline{a_{2}})$.

    Let the structure $H_1$ be an isomorphic copy of
    $N_{i}$ (which is the substructure induced by $B_{\theta}(Y'_i)$). Since $H_1$ consists of
$2\theta+1$ layers, the tree-width and hence clique-width of
$H_1$ depends only on $2\theta+1$. Let $H_2$ be a disjoint union of at
most $r$ spheres of radius at most $2r\rho$, containing an isomorphic copy of
$N_{\rho}(\overline{a_2})$. To construct $H_{2}$,
start with an arbitrary element $a$ in $\overline{a_2}$ and include the
sphere of radius $2\rho$ around $a$. If the sphere contains any other
element $a'$ of $\overline{a_2}$, include the sphere of radius $4\rho$
around $a$, thus including the sphere of radius $\rho$ around $a'$.
Now, if the sphere of radius $4\rho$ around $a$ includes a third
element $a''$ of $\overline{a_2}$, include the sphere of radius
$6\rho$ around $a$. We need to continue this process at most $r$ times
(since there are at most $r$ elements in $\overline{a_2}$), so we get
a sphere of radius at most $2r\rho$. If this sphere doesn't contain
some element $a_3$ (this can happen if the distance between $a$ and
$a_3$ is more than $2r\rho$) of $\overline{a_2}$, we construct another sphere of
radius at most $2r\rho$ around $a_3$ to include more elements of
$\overline{a_2}$. We thus construct at most $r$ spheres of radius at most
$2r\rho$.
The clique-width of $H_2$ also depends only on $r$ and $\rho$. Let $H$ be the disjoint union of $H_1$ and
$H_2$. For the
elements
    $\overline{a_{1}}bb'$ in $N_i$, the isomorphism with $H_{1}$
    will give corresponding elements in $H_{1}$; let
    $h(\overline{a_{1}}bb')$ be these corresponding elements.
    Similarly, let $h(\overline{a_{2}})$ be the elements in $H_{2}$
    corresponding to $\overline{a_{2}}$ in
    $N_{\rho}(\overline{a_{2}})$. Let $\mathcal{T}$ be a parse tree of
    $N_i$ (and so of $H_{1}$) and
    $\mathcal{T}'$ be a parse tree of $H_{2}$ of minimum
    clique-widths, with $k$ being the maximum of these two widths.
    We obtain a parse tree $\mathcal{T}''$ of $H$ of clique-width
    at most $k$ by making $\mathcal{T}$ and $\mathcal{T}'$ as subtrees
    of a new root labeled by the union operation. We selected $b,b'$
    according to Lemma~\ref{lem:treeaut} with the tree automaton $A$
    and parse tree $\mathcal{T}$. We infer that the automaton $A$
    labels the roots of $\mathcal{T}_{h(\overline{a_{1}}b)}$ and
    $\mathcal{T}_{h(\overline{a_{1}}b')}$ with the same state.  Hence,
    the automaton $A$ labels the roots of
    $\mathcal{T}''_{h(\overline{a_{1}}b\overline{a_{2}})}$ and
    $\mathcal{T}''_{h(\overline{a_{1}}b'\overline{a_{2}})}$ with the
    same state (note that $h(\overline{a_{1}}bb')$ are in
    $\mathcal{T}$ while $h(\overline{a_{2}})$ are in
    $\mathcal{T}'$).
    Hence, we infer that $H\models
    \psi_{j}^{(\rho)}(h(\overline{a}),h(b))$
    iff $H\models\psi_{j}^{(\rho)}(h(\overline{a}),h(b'))$ for every $j \in
    \{1, 2, \ldots, \alpha\}$. The substructure induced on $G$ by
    $S_{\rho}(\overline{a}bb')$ is isomorphic to the
    substructure induced on $H$ by $S_{\rho}(h(\overline{a}bb'))$.
    Since $\psi_{j}^{(\rho)}(\overline{x},y)$ is a local formula
    around $\overline{x},y$ with locality radius $\rho$, we infer that
    $H\models \psi_{j}^{(\rho)}(h(\overline{a}),h(b))$ iff $G\models
    \psi_{j}^{(\rho)}(\overline{a},b)$ and
    $H\models\psi_{j}^{(\rho)}(h(\overline{a}),h(b'))$ iff
    $G\models\psi_{j}^{(\rho)}(\overline{a},b')$. Hence, $G \models
    \psi^{(\rho)}_{j}(\overline{a},b)$ iff $G \models
    \psi^{(\rho)}_{j}(\overline{a},b')$ for every $j \in \{1,2,
    \ldots, \alpha\}$. We can now apply
    Lemma~\ref{lem:indistinguishableLocalFormulas} to infer the
    result.
\end{proof}
The technique of considering a small number of layers to get graphs of
bounded tree-width was known before \cite{Baker1994,Eppstein2000}. Here, we find ways of
using it to bound global distortions and that is the main
technical contribution of this paper.
Now we state the main result of this sub-section.
\begin{theorem}\label{thm:presone}
  Suppose $\mathcal{K}$ is a class of structures whose Gaifman graphs
belong to a class of graphs that is closed under minors and have
locally bounded tree-width.
Suppose $\psi(\overline{x}, y)$ is a
  unary first-order query of input length $r$. Then, there exists a
  scalable watermarking scheme preserving $\psi(\overline{x},y)$ on structures in $\mathcal{K}$.
\end{theorem}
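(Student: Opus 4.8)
The plan is to turn every pair in the families $M_1$ and $M_2$ built above into a slot that hides one bit, exactly as in the proof of Theorem~\ref{thm:msoTreewidth}, and then to exploit the disjointness built into the band/layer construction to bound the global distortion by a constant. Write $l=|M_1|+|M_2|$ and list the pairs as $(b_1,b_1'),\dots,(b_l,b_l')$; here I read $M_1$ as a family of vertex-disjoint pairs, one element from each of the non-empty sparse layers (so $|M_1|=\lfloor\delta/2\rfloor$), so that all $2l$ of the elements listed are pairwise distinct and all lie in $U_c$. The marker, given $(G,W)\in\mathcal{K}$ and a mark $\mu\in\{0,1\}^l$, would set $(W_\mu(b_i),W_\mu(b_i'))=(W(b_i)+1,W(b_i')-1)$ when $\mu(i)=1$ and $(W(b_i)-1,W(b_i')+1)$ when $\mu(i)=0$, leaving $W_\mu=W$ elsewhere. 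Vertex-disjointness makes this well defined, the Gaifman graph is untouched so $(G,W_\mu)\in\mathcal{K}$, and $|W_\mu-W|\le 1$ pointwise, so $(G,W_\mu)$ is a $1$-local distortion of $(G,W)$.

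For the global distortion, fix a parameter $\overline{a}=\langle a_1,\dots,a_r\rangle\in V^r$ and set $\Delta=W_\mu-W$, so that $W_\mu(\psi(\overline{a},G))-W(\psi(\overline{a},G))=\sum_{b\in\psi(\overline{a},G)}\Delta(b)$. Since $\Delta$ is supported on the pair elements and $\Delta(b_i)+\Delta(b_i')=0$, a pair contributes to this sum only if exactly one of its two members lies in $\psi(\overline{a},G)$. Lemma~\ref{lem:noDistortionSparse} rules this out for an $M_1$-pair drawn from layers $Y''_j,Y''_{j+1}$ unless $\{a_1,\dots,a_r\}$ meets $B_{2\rho}(Y''_j)\cup B_{2\rho}(Y''_{j+1})$, and Lemma~\ref{lem:noDistortionDense} rules it out for an $M_2$-pair drawn from a set $V_j$ of a band $N_i$ unless $\{a_1,\dots,a_r\}$ meets $V_j$. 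I would then invoke the disjointness already recorded in the construction: over all layers $Y_j$ of $\bigcup\mathcal{L}_i$ the bands $B_{2\rho}(Y_j)$ are pairwise disjoint, the bands $B_\theta(Y'_i)$ are pairwise disjoint, and within each $N_i$ the sets $V_j$ are pairwise disjoint with all their vertices inside the universe $B_\theta(Y'_i)$ of $N_i$. Hence each single $a_s$ lies in at most one of these danger zones and so spoils at most one pair; since there are only $r$ parameter elements, at most $r$ pairs fail to cancel, each contributing $\pm1$, so $|W_\mu(\psi(\overline{a},G))-W(\psi(\overline{a},G))|\le r$, and I would take the global-distortion constant $d:=r$.

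The detector $\mathcal{D}$ is given $(G,W)$ and knows $\psi(\overline{x},y)$, so it can recompute $U$, $U_c$, the layering around the chosen vertex, the automata, and hence the identical list of pairs. Because every $b_i$ and $b_i'$ is active, $\mathcal{D}$ can feed the oracle a parameter placing that element in the query output, read off its $W_\mu$-weight, compare with the known $W$-weight, and recover $\mu(i)$ from the sign of $\Delta(b_i)$; correctness needs no scalability here. For scalability, $|U_c|\ge|U|/t(q,r+1)$ and the pigeonhole choice of $\mathcal{L}_i$ give $|U_c\cap(\bigcup\mathcal{L}_i)|\ge|U|/(t(q,r+1)(2\theta+1))$; since a sparse layer holds fewer than $2m^m+2$ of these elements (so the count $\delta$ of non-empty sparse layers, and hence $|M_1|=\lfloor\delta/2\rfloor$, is a fixed fraction of the active elements they carry) while a dense layer of size $x$ contributes $\lfloor x/(4m^m+4)\rfloor$ pairs to $M_2$ (dense layers too small to yield even one pair being absorbed into the sparse count, which only changes the constants), a routine count gives $l=|M_1|+|M_2|\ge|U|/c$ for a constant $c$ depending only on $q,r,\rho,m$. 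Then $f(n):=\lfloor n/c\rfloor$ is admissible, grows linearly, and $(\mathcal{M},\mathcal{D},f,d)$ is a scalable watermarking scheme preserving $\psi(\overline{x},y)$ on $\mathcal{K}$.

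The genuinely hard step is not in this theorem but behind it: ensuring that a pair chosen inside one band is indistinguishable at $b$ and at $b'$ by the target query even when the $r$ parameters are scattered arbitrarily far across $G$ — precisely Lemma~\ref{lem:noDistortionDense}, whose Case~III, the buffer-region argument that glues $N_i$ to a small disjoint union of neighbourhoods of the distant parameters, is the crux. Granting those lemmas, the work left in the theorem is organisational: tracking exactly which band, which sphere $B_{2\rho}(\cdot)$, or which set $V_j$ a parameter element can fall into, so that each parameter spoils at most one watermarking pair, plus an elementary count for the scalable size of $M_1\cup M_2$. The one remaining point needing care is that the $M_1$- and $M_2$-pairs are not merely internally vertex-disjoint but also occupy mutually disjoint danger zones, which follows at once from the pairwise disjointness of the bands $B_\theta(Y_j)$.
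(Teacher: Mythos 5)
Your proposal is correct and follows essentially the same route as the paper: hide one bit per pair of $M_1\cup M_2$ via $\pm 1$ distortions, invoke Lemmas~\ref{lem:noDistortionSparse} and~\ref{lem:noDistortionDense} together with the pairwise disjointness of the bands $B_\theta(\cdot)$ and the sets $V_j$ to bound the global distortion by $r$, and count pairs against $|U_c|\ge |U|/t(q,r+1)$ for scalability. Your explicit disjoint pairing of the sparse-layer representatives and your absorption of dense layers too small to yield a pair into the sparse count are faithful readings that only adjust constants.
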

\begin{proof}
    Given a structure in $\mathcal{K}$, we select the set of
    watermarking pairs $M_{1}$ and $M_{2}$ as explained above. We can
    hide any message of $|M_{1} \cup M_{2}|$ bits using these pairs.
    For every layer $Y'_{i}$, $M_{2}$ contains $\lfloor\frac{|Y'_{i}\cap
    U_{c}|}{4m^m+4}\rfloor$ watermarking pairs. For every two layers $Y''_{j},
    Y''_{j+1}$, $M_{1}$ contains a watermarking pair. The layers
    $Y''_{j}, Y''_{j+1}$ together contain at most $4m^m+4$ elements from
    $U_{c}$. The layers $Y'_{1}, Y'_{2}, \ldots, Y'_{\gamma},
    Y''_{1}, Y''_{2}, \ldots, Y''_{\delta}$ together contain
    $\frac{U_c}{2\theta+1}$
    elements of $U_{c}$. Hence, $M_{1} \cup M_{2}$ contain at least
    $\frac{|U_{c}|}{(2 \theta +1) \cdot (4m^m+4)}$ pairs, which is the length of the message we
    can hide. Recall that $U_{c}$ itself is of cardinality at least
    $\frac{|U|}{t(q,r+1)}$. Hence, we can hide messages of at least
    $\frac{|U|}{(2 \theta+1) \cdot (4m^m+4)t(q,r+1)}$ bits.
Since $m$, $t(q,r+1)$ and $\theta$ are
    independent of the size of the structure, this scheme can hide
    messages whose length is a constant fraction of the number of
    active tuples, so it is scalable.

    Now we prove that the global distortion is at most $r$. Suppose
    $G$ is the structure we are watermarking. A pair
    $(b,b') \in M_{1} \cup M_{2}$ contributes $1$ to the global
    distortion of the query $\psi(\overline{x}, y)$ with input
    parameter $\overline{a} = \langle a_{1}, \ldots, a_{r}\rangle$ iff
    $b\in \psi(\overline{a},G)$ and $b'\notin \psi(\overline{a},G)$ or
    vice versa. For pairs $(b,b') \in M_{1}$, this can happen only if
    $\{a_{1}, \ldots, a_{r}\} \cap (B_{2\rho}(Y''_{i}) \cup
B_{2\rho}(Y''_{(i+1)})) \ne
    \emptyset$, as shown in Lemma~\ref{lem:noDistortionSparse}, where
    $Y''_{i},Y''_{(i+1)}$ are the layers containing $b,b'$ respectively.
    For pairs $(b,b') \in M_{2}$, this can happen only if $\{a_{1},
    \ldots, a_{r}\} \cap V_{j'} \ne \emptyset$, as shown in
    Lemma~\ref{lem:noDistortionDense}, where $V_{j'}$ is the
    subset of $B_{\theta}(Y''_i)$ from which the pair $(b,b')$ was chosen. The
    sets $B_{\theta}(Y''_i)$ and
    $V_{j'}$ are all mutually disjoint for
    all $i$ and $j'$. Hence, $\{a_{1}, \ldots, a_{r}\}$ can intersect
    with at most $r$ of the bands $B_{2\rho}(Y''_{i})$ and sets $V_{j'}$. So the global
    distortion of $\psi(\overline{a},G)$ is at most $r$.
\end{proof}

\section{Conclusion}
In \cite{KZ2000}, there is a transformation of
watermarking schemes for non-adversarial models into schemes for
adversarial models, under some assumptions. As observed in
\cite{Gross2011}, the same transformation under similar assumptions
also work for MSO and FO queries. Hence, our result on FO queries can
also use a similar transformation to work on adversarial models.

The difficulty with non-unary queries is that Gaifman graphs don't
capture information about active tuples --- even if two elements
$b_{1},b_{2}$ appear in the same active tuple, the Gaifman graph may
not have an edge between $b_{1}$ and $b_{2}$. The results on VC
dimension use powerful results from model theory \cite{Shelah1971} or
versions of finite Ramsey theorem for hyper graphs
\cite{Laskowski1992}. It remains to be seen whether similar
results are true for watermarking schemes.
It also remains to be seen if the condition
on closure under minors can be dropped and watermarking schemes can
still be obtained, as shown for VC dimension in \cite{GT2004}.

Beginning with graphs of bounded degree, it is now known that for the
much bigger class of graphs that are nowhere dense, FO properties can
be efficiently decided. It remains to be seen whether results on
watermarking schemes can be extended to the class of
graphs that are nowhere dense.

We don't know if there are deeper connections between bounded VC
dimension and presence of scalable watermarking schemes preserving
queries. Some progress is made in \cite{Gross2011}, where it is shown
that unbounded VC dimension doesn't necessarily mean absence of
scalable watermarking schemes, but more work is needed in this
direction.

\bibliography{references}

\newpage
\appendix
\section{{Counterpart of Lemma~\ref{lem:treeaut} in \cite{Gross2011}}}
\label{app:errorInGross2011}

In \cite{Gross2011}, the following variant of Lemma \ref{lem:treeaut} appears along with a proof. We quote the statement and the proof here and explain what the error is.

\begin{lemma}[{\cite[Lemma~5.5]{Gross2011}}]
  Let $B$ be $\Sigma_2$-tree automaton with $m$ states. Suppose $T$ is a $\Sigma$-tree and $S \subseteq T$, then there exists $n = \frac{|T|}{4m}$ distinct sets $V_1, \ldots V_n \subseteq T$ and $n$ distinct pairs $(b_i, b_i') \in V_i^2$ of distinct elements such that $\forall i \neq j, V_i \cap V_j = \emptyset$, and $\forall a \in T$:
  $$a \not \in V_i \implies \left ( b_i \in B(a, T) \iff b_i' \in B(a, T) \right ) $$
\end{lemma}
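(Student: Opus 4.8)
The plan is to run the bottom-up carving argument behind Lemma~\ref{lem:treeaut} --- which is itself a variant of the Grohe--Tur\'an construction --- but in a sharpened form in which only the $m$ states of $B$, and not the $m^{m}$ state-transformation functions, are ever put through a pigeonhole step; this is what one would need to reach the advertised constant $4m$ rather than the much weaker $4m^{m}$ obtained in Lemma~\ref{lem:treeaut}. Concretely: first I would cut $T$, from the leaves upwards, into pairwise disjoint blocks, each a minimal remaining subtree of size at least $m+1$; then organise these blocks into an ancestry forest and discard the few that have two or more children in it; then from each surviving block extract a pair $(b_i,b_i')$ that $B$ cannot separate as long as the parameter pebble stays outside that block.

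For the carving: repeatedly pick the lowest node $w$ whose subtree, restricted to the not-yet-removed part of $T$, still contains at least $m+1$ nodes, take that set of remaining nodes as the next block $V_{1},V_{2},\ldots$, and delete it from $T$; since $T$ is binary, minimality of $w$ forces $|V_{i}|\le 2m+1$, and the process halts once fewer than $m+1$ nodes are left, so we obtain $\alpha\ge\lfloor(|T|-m)/(2m+1)\rfloor$ blocks. Build the forest $F$ on these blocks by joining $V_{i}$ to $V_{j}$ when $\mathrm{lca}(V_{i})$ is the nearest carved root strictly above $\mathrm{lca}(V_{j})$; $F$ has at most $\alpha-1$ edges, so at most $\alpha/2$ of its nodes have more than one child, and the remaining $n\ge\alpha/2\approx|T|/(4m)$ blocks --- which I keep and relabel $V_{1},\ldots,V_{n}$ --- have at most one child. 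This already fixes the target count. (The set $S$ of the statement plays no role in this bound; carving by $S$-nodes instead would give the analogous $|S|/(4m)$.)

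Extracting the pair then splits into two cases. If $V_{i}$ has no child in $F$, nothing below $\mathrm{lca}(V_{i})$ was ever deleted, so $V_{i}$ is a \emph{full} subtree of $T$; for any parameter $a\notin V_{i}$ the path from $a$ to the root of $T$ avoids $V_{i}$, so the pebble on $a$ leaves every state inside $V_{i}$ untouched, and the state $B$ assigns to $\mathrm{lca}(V_{i})$ depends only on where the output pebble sits inside $V_{i}$. Since $|V_{i}|\ge m+1>m$, two distinct nodes $b_{i},b_{i}'\in V_{i}$ receive the same such state, and then the runs of $B$ on $T_{ab_{i}}$ and $T_{ab_{i}'}$ coincide from $\mathrm{lca}(V_{i})$ up to the root, giving $b_{i}\in B(a,T)\iff b_{i}'\in B(a,T)$ as required. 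If $V_{i}$ has exactly one child $V_{j}$ in $F$, then $V_{i}$ is a full subtree with the full sub-subtree rooted at $\mathrm{lca}(V_{j})$ removed; when $a$ lies outside the full subtree rooted at $\mathrm{lca}(V_{i})$ the no-child argument goes through verbatim.

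The hard part will be precisely the one-child case with $a$ lying \emph{inside} the removed sub-subtree: then $a$ sits below $\mathrm{lca}(V_{i})$, the pebble on $a$ does reach into the run feeding $\mathrm{lca}(V_{i})$ from below, and the state at $\mathrm{lca}(V_{i})$ is no longer a function of the output pebble alone but also of the state handed up out of the hole. To keep the pigeonhole at size $m$ one must either confine that incoming state to a small set or demand that $b_{i}$ and $b_{i}'$ induce the same state at $\mathrm{lca}(V_{i})$ \emph{for every} incoming state --- and the latter is a pigeonhole over the $m^{m}$ maps $Q\to Q$, for which blocks of size $2m+1$ are far too small once $m\ge 3$. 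Reconciling this with the claimed $4m$ without enlarging the blocks (which is exactly what would degrade the constant) is the crux, and is the step on which I expect the argument to stand or fall; the indistinguishable pairs from all surviving blocks then witness the lemma.
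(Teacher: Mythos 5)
You have correctly isolated the obstruction, and it is exactly the error that the appendix is exhibiting. In the one-child case, the parameter $a$ may sit inside the removed sub-subtree rooted at $\mathrm{lca}(V_j)$; then the state that $B$ hands up into $\mathrm{lca}(V_i)$ depends on $a$, and to obtain a \emph{single} pair $(b_i,b_i')$ that works for every such $a$ you would need to pigeonhole over the $m^m$ maps $Q\to Q$ that candidate output pebbles induce on the partial run feeding $\mathrm{lca}(V_i)$, not over the $m$ states. Blocks carrying roughly $2m$ nodes are far too small for that once $m\ge 3$, as you anticipate; there is no way to ``confine the incoming state'' without enlarging the blocks, because the parameter can move freely over the child subtree.

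There is no repair to attempt here: the statement is quoted verbatim from \cite[Lemma~5.5]{Gross2011} precisely in order to point out that its proof breaks at this step, and the paragraph following the quoted proof identifies the same failure you found (for different $a$, $B$ may reach $\mathrm{lca}(U_j)$ in different states $q_t$, forcing different pairs $(b_{i,t},b_{i,t}')$, whereas a single pair is needed). The repaired version is Lemma~\ref{lem:treeaut}, where each carved block is required to contain $m^m+1$ rather than $m+1$ elements of $Y$, so that the pigeonhole over functions $Q\to Q$ goes through and the constant correspondingly degrades from $4m$ to $4m^m+4$. You also correctly flag the secondary infelicity that the quoted bound $|T|/4m$ should read $|S|/4m$, since the intended carving is by $S$-elements and $S$ otherwise plays no role in the conclusion.
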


\begin{proof}
  The first phase of the proof is similar to the one that appears in Lemma \ref{lem:treeaut}.

  First, we form subtrees $U_1, U_2, \ldots U_{2n}$ by iteratively considering minimal subtrees with at least $2m$ elements of $S$ and removing them. Since $T$ is binary, each $U_i$ contains at most $4m$ elements of $S$. The number of times this construct can be iterated is $2n = \lfloor \frac{S \cap T}{4m} \rfloor$ times.

  Now, define the binary relation $F$ on $H = \{ U_1, \ldots , U_{2n} \}$ to be
the set of all pairs $(U_i, U_j)$ such that $\mathrm{lca}(U_i) \prec^T \mathrm{lca}(U_j)$ such
that there is no $k$ such that $\mathrm{lca}(U_i) \prec^T \mathrm{lca}(U_k) \prec^T \mathrm{lca}(U_j)$.
This turns $(H,F)$ into a forest with $2n$ vertices and at most $2n - 1$ edges.
Therefore, there are at most $n$ elements of this forest with more than $1$
child.

  If $U_i$ has no child, let $V_i$ be the elements of the subtree rooted at
$\mathrm{lca}(U_i)$. If $U_i$ has a child $U_j$, then define $V_i$ to be the elements of
the subtree rooted at $\mathrm{lca}(U_i)$ that are not contained in the subtree rooted
at $\mathrm{lca}(U_j)$. This ensures that $V_1, \ldots V_n$ are pairwise disjoint.

  If for some $i$, $U_i$ has no child, since $|U_i| \geq 2m$ is greater than the number $m$ of automaton states, there exists two distinct elements $b_i, b_i' \in U_i$ such that for all $a \not \in V_i$, automaton $B$ running on $T_{ab_i}$ or $T_{ab_i'}$ reaches $lca(U_i)$ in the same state.

  If $U_i$ has a child $U_j$, and $q_1, \ldots q_m$ are the states of $B$, we
define pairs $b_{i,k}, b_{i,k}'$ for $1 \leq k \leq m$ by induction on $k$.
Suppose $1 \leq k \leq m$ and that $b_{i,l}, b_{i,l}'$ are defined for $l < k$.
Since $|U_i| \geq 2m$, we have $|U_i \setminus \{ b_{i,1}, \ldots b_{i, k-1} \}
| > m$. Therefore, there exists elements $b_{i,k}, b_{i,k}' \in U_i \setminus
\{ b_{i,1}, \ldots b_{i, k-1} \}$ such that there is a state $q_{i,k}$ of $B$
for which if $a \not \in V_i$, the automaton $B$ running on either
$T_{ab_{i,k}}$ or $T_{ab_{i,k}'}$ reaches $\mathrm{lca}(U_i)$ in the same state
$q_{i,k}$ as long as the automaton leaves $\mathrm{lca}(U_j)$ in state $q_k$.

  Thus, if $U_i$ has no child and $a \in V_i$, $B$ accepts $T_{ab_i}$ iff $B$
accepts $T_{ab_i'}$. If $U_i$ has one child $U_j$, $a \not \in U_i$ and $B$
ends in $\mathrm{lca}(U_j)$ in state $q_t$, $B$ accepts $T_{ab_{i,t}}$ iff it accepts
$T_{ab_{i,t}'}$.

\end{proof}

The error is in the case where $U_i$ has one child $U_j$. In this
case, even if $a\notin V_i$, $a$ can be in $V_j$. Hence, for different
$a$, $B$ may reach $\mathrm{lca}(U_j)$ with different states $q_t$, so
accordingly we will have to choose different pairs $(b_{i,t},
b_{i,t}')$. However, what we need is a single pair $(b_i,b'_i)$ that
works for all $q_{t}$.

The proof in \cite{Gross2011} was motivated by the proof of
\cite[Lemma~7]{GT2004}. However, \cite[Lemma~7]{GT2004} is used in the
context of proving bounded VC dimension, where it is fine to have
different pairs $(b_{i},b_{i}')$ for different parameters
$\overline{a}$. Hence, the proof of \cite[Lemma~7]{GT2004} works
correctly as needed in \cite{GT2004}. In our context of designing
watermarking schemes, the same proof doesn't work and we have modified
the proof of Lemma~\ref{lem:treeaut} to work as necessary.

\end{document}